\date{}
\newcolumntype{M}[1]{>{\centering\arraybackslash}m{#1}}
\newcolumntype{N}{@{}m{0pt}@{}}
\DeclareMathOperator{\E}{\mathbb{E}}
\newtheorem{theorem}{{Theorem}}
\newtheorem{lemma}[theorem]{{Lemma}}
\newtheorem{proposition}[theorem]{{Proposition}}
\DeclareMathAlphabet{\mathbfsl}{OT1}{ppl}{b}{it} 
\newcommand{\be}[1]{\begin{equation}\label{#1}}
	\newcommand{\ee}{\end{equation}}
\renewcommand{\leq}{\leqslant}
\renewcommand{\geq}{\geqslant}
\renewcommand{\Bbb}{\mathbb}
\newcommand{\R}{{\Bbb R}}
\newcommand{\Tref}[1]{Theo\-rem\,\ref{#1}}
\newcommand{\Pref}[1]{Pro\-po\-si\-tion\,\ref{#1}}
\newcommand{\Lref}[1]{Lem\-ma\,\ref{#1}}
\newcommand{\Cref}[1]{Co\-ro\-lla\-ry\,\ref{#1}}
\begin{document}
	\title{Covert Millimeter-Wave Communication:\\
	Design Strategies and Performance Analysis}
	\author{Mohammad Vahid Jamali and Hessam Mahdavifar,~\IEEEmembership{Member,~IEEE}
		\thanks{This paper was presented in part at the IEEE Global Communications Conference (GLOBECOM), Waikoloa, HI, USA, Dec. 2019 \cite{jamali2019covert}.}
		\thanks{The authors are with the Department of Electrical Engineering and Computer Science, University of Michigan, Ann Arbor, MI 48109, USA (e-mail: mvjamali@umich.edu, hessam@umich.edu).}
		\thanks{This work was supported by the National Science Foundation under grants CCF--1763348, CCF--1909771, and CCF--1941633.}
	}
\maketitle
\begin{abstract}
In this paper, we investigate covert communication over millimeter-wave (mmWave) frequencies. In particular, a mmWave transmitter, referred to as Alice, attempts to reliably communicate to a receiver, referred to as Bob, while hiding the existence of communication from a warden, referred to as Willie.
 In this regard, operating over the mmWave bands not only increases the covertness thanks to directional beams, but also increases the transmission data rates given much more available bandwidths and enables ultra-low form factor transceivers due to the lower wavelengths used compared to the conventional radio frequency (RF) counterpart. We first assume that the transmitter Alice employs two independent antenna arrays in which one of the arrays is to form a directive beam for data transmission to Bob. The other antenna array is used by Alice to generate another beam toward Willie as a jamming signal while changing the transmit power independently across the transmission blocks in order to achieve the desired covertness. For this dual-beam setup, we characterize Willie's detection error rate with the optimal detector and the closed-form of its expected value from Alice's perspective. We then derive the closed-form expression for the outage probability of the Alice-Bob link, which enables characterizing the optimal covert rate that can be achieved using the proposed setup. We further obtain tractable forms for the ergodic capacity of the Alice-Bob link involving only one-dimensional integrals that can be computed in closed forms for most ranges of the channel parameters. Finally, we highlight how the results can be extended to more practical scenarios, particularly to the cases where perfect information about the location of the passive warden is not available. Our results demonstrate the advantages of covert mmWave communication compared to the RF counterpart. The research in this paper is the first analytical attempt in exploring covert communication using mmWave systems.
\end{abstract}
\begin{keywords} 
Covert communication, mmWave communication, communication with low probability of detection, detection error rate, effective covert rate, ergodic capacity, Nakagami fading channels.
\end{keywords}
\section{Introduction}\label{Sec1}
\IEEEPARstart{R}apid growth of wireless networks and the emergence of variety of applications, including the  Internet of Things (IoT), massive machine-type communication (mMTC), and critical controls, necessitate sophisticated solutions to secure data transmission. Traditionally, the main objective of security schemes, using either cryptographic or information-theoretic approaches, is to secure data in the presence of adversary eavesdroppers. However, a stronger level of security can be obtained in wireless networks if the existence of communication is hidden from the adversaries. To this end, recently, there has been increasing attention to investigate covert communication, also referred to as communication with \textit{low probability of detection} (LPD), in various scenarios with the goal of hiding the existence of communication \cite{bash2013limits, bash2015hiding, che2013reliable, wang2016fundamental, bloch2016covert,tahmasbi2018first, arumugam2019embedding,arumugam2018covert}.  Generally speaking, covert communication refers to the problem of reliable communication between a transmitter Alice and a receiver Bob while maintaining a low probability of detecting the existence of communication from the perspective of a warden Willie \cite{bash2015hiding}.

{In contrast to traditional cryptographic schemes and similar to physical-layer security schemes, covert communication exploits the physical layer of a communication network to provide security. The most important difference in the setting of physical-layer security and covert communication is the functionality of the illegitimate parties, i.e., the eavesdropper Eve and the warden Willie. In fact, while covert communication attempts to hide the existence of the communication from the warden, physical-layer security schemes aim at minimizing the information obtained by the eavesdroppers through exploiting the dynamic characteristics of the wireless medium \cite{bloch2011physical_book}. Therefore, as opposed to covert communication, physical-layer security does not provide protection against the detection of a transmission. Hence, covert communication can provide a stronger level of security while also achieving privacy of the transmitter by guaranteeing a negligible detection probability of the transmission at a warden.}

{The information-theoretic limits on the rate of covert communication have been presented in \cite{bash2013limits} for additive white Gaussian noise (AWGN) channels. More specifically, assuming the communication blocklength to be $n$, it has been proved in \cite{bash2013limits} that $\mathcal{O}(\sqrt{n})$ bits of information can be transmitted to Bob, reliably and covertly, in $n$ uses of the channel, as $n\to\infty$.}
The same square root law has been developed for binary symmetric channels (BSCs) in \cite{che2013reliable} and discrete memoryless channels (DMCs) in \cite{wang2016fundamental}. Moreover, the principle of channel resolvability has been  used in \cite{bloch2016covert} to develop a coding scheme that can reduce the number of required shared key bits. Also, the first- and second-order asymptotics
of covert communication over binary-input DMCs have been studied in \cite{tahmasbi2018first}. The covert communication setup has also been extended to broadcast channels \cite{arumugam2019embedding} and to multiple-access channels \cite{arumugam2018covert} from an information-theoretic perspective.

{The achievable covert rate (i.e., the ratio of the number of information bits to the number of channel uses) in the aforementioned framework is zero as $n$ grows large since $\lim_{n\to\infty}\mathcal{O}(\sqrt{n})/n=0$.} However, it is demonstrated that positive covert rates can be achieved by introducing additional uncertainty, from Willie's perspective, into the system. For instance, it is shown in \cite{lee2015achieving,goeckel2016covert} that Willie's uncertainty about his noise power helps achieving positive 
covert rates. Moreover, by considering slotted AWGN channels, it is proved in \cite{bash2016covert} that positive covert rates are achievable if the warden does not know when the transmission is taking place. The possibility of achieving positive-rate covert communication is further demonstrated for several other scenarios such as amplify-and-forward (AF) relaying networks with a greedy relay attempting to transmit its own information to the destination on top of forwarding the source's information \cite{hu2018covert}, dual-hop relaying systems with channel uncertainty \cite{wang2019covert}, a downlink scenario under channel uncertainty and with a legitimate user as the cover \cite{shahzad2017covert}, and a single-hop setup with a full-duplex receiver acting as a jammer \cite{shahzad2018achieving}. 
Additionally, covert communication in the presence of a multi-antenna adversary, under delay constraints, and for the case of quasi-static wireless fading channels is considered in \cite{shahzad2019covert}.
 In \cite{hu2019covert}, channel inversion power control is adopted to achieve covert communication with the aid of a full-duplex receiver. Covert communication in the context of unmanned aerial vehicle (UAV) networks is considered in \cite{zhou2019joint}. { Physical-layer security has been investigated in \cite{wang2018physical,liu2020robust} for visible light communication (VLC) and can be extended to covert communication.}
 	 Very recently, the problem of joint covert communication and secure transmission in untrusted relaying networks in the presence of multiple wardens has been considered in \cite{forouzesh2020covert}. Moreover, the benefits of beamforming in improving the performance of covert communication in the presence of a jammer has been studied in \cite{forouzesh2020covert2}.

Prior studies on covert communication in wireless networks mostly consider omni-directional transmission over conventional radio frequency (RF) wireless links. However, a superior performance can be potentially attained when performing the covert communication over the millimeter-wave (mmWave) bands. In particular, operating over the mmWave bands not only increases the covertness thanks to directional beams, but also increases the transmission data rates given much more available bandwidths and enables ultra-low form factor transceivers due to the lower wavelengths used compared to the conventional RF counterpart.
This makes the mmWave band a suitable option for covert communication to increase the security level of wireless applications involving critical data. Also, with the advancement in mmWave communications and rapid development of mmWave cellular networks in the fifth generation of wireless networks (5G) and beyond that, mmWave systems will serve as major components for a wide range of emerging wireless networking applications and use cases. This necessitates secure transmission schemes for mmWave systems and further highlights the importance of covert mmWave communication.

The channel model and system architecture of mmWave communication systems significantly differ from those of RF communication. In particular, communication over the mmWave bands can exploit directive beams, thanks to the deployment of massive antenna arrays, to compensate for the significant path loss over this range of frequency\footnote{{Directive beams can also be exploited over RF systems through beamforming technology. However, given much smaller wavelengths at the mmWave bands compared to the RF bands, it is much easier to realize large antenna arrays and (narrow) directive beams, especially at mobile users, over mmWave systems.}}. Meanwhile, the significant susceptibility of directive mmWave links to blockage results in a nearly bimodal channel depending on whether a line-of-sight (LOS) link exists between the transmitter and receiver \cite{andrews2017modeling}. Furthermore, the properties of mmWave and RF channels, including path loss and statistical distribution of fading, are often modeled very differently. Therefore, the existing results on covert communication cannot immediately be extended to covert communication over the mmWave bands. 

In this paper, we study covert communication over mmWave channels from a communication theory perspective. More specifically, we analyze the performance of the system in the limit as the blocklength $n$ grows large. In order to achieve a positive-rate covert communication, the transmitter Alice is equipped with two antenna arrays each pointed to a different direction and carrying independent data streams. The first antenna array forms a directive beam for covert data transmission to Bob. The second array is used to generate another beam toward Willie as a jamming signal while the transmit power is changed independently across the transmission blocks in order to achieve desired covertness. The research in this paper is the first attempt in analytical studies of covert communication over mmWave systems. It is worth mentioning that a conceptual framework for covert mmWave communication was envisioned in \cite{cotton2009millimeter} without providing analytical studies. To the best of the authors' knowledge, no analytical characterization for covert mmWave system has been carried out in prior works. 
Very recently, after the appearance of the initial version of this work \cite{jamali2019covert}, Zhang \textit{et al.} \cite{zhang2020joint} studied joint beam training and data transmission for covert mmWave communication. More specifically, the authors of \cite{zhang2020joint} aimed at jointly optimizing the beam training duration (to establish a directional link between Alice and Bob), the training power, and the data transmission power to maximize the effective covert rate while satisfying the covertness constraint on Willie. 

{
The main contributions of the paper are summarized as follows.
\begin{itemize}
	\item We characterize Willie's optimal detection performance in terms of the overall (minimum) detection error rate, and derive the closed form for the expected value of the detection error rate from Alice's perspective. 
	\item To characterize the performance of the desired link, we obtain the closed-form expression for the outage probability of the Alice-Bob link, and then formulate the optimal covert rate that is achievable in our proposed setup.
	\item We further obtain tractable forms for the ergodic capacity of the Alice-Bob link involving only one-dimensional integrals that can be computed in closed forms for most ranges of the channel parameters.
	\item We highlight how the results of the paper can be extended to more practical scenarios, particularly to the cases where perfect information about Willie's location is not available to Alice. 
	We also provide several important directions for future research on covert mmWave communication.
	\item We present extensive numerical analysis to study the system performance in various aspects.
\end{itemize}
}

The rest of the paper is organized as follows. In Section \ref{Sec2}, we briefly summarize the mmWave channel model and describe the proposed covert mmWave communication setup. In Section \ref{Sec3}, we analyze Willie's overall error rate with an optimal radiometer detector, and then obtain its expected value from Alice's perspective. Section \ref{Sec4} is devoted to studying the performance of the Alice-Bob link, in terms of the outage probability, effective covert rate, and ergodic capacity. Discussions about various realistic scenarios, including imperfect knowledge about Willie's location, as well as some future research directions are provided in Section \ref{Sec5}. Finally, extensive numerical results are presented in Section \ref{Sec6}, and the paper is concluded in Section \ref{Sec7}.

\section{Channel and System Models}\label{Sec2}
{ In this section, we first briefly characterize mmWave channels and describe their distinct properties to enable explaining the system model presented afterwards.
\subsection{MmWave Channel Model}\label{Sec2A}}
Recent experimental studies have demonstrated that mmWave links are highly sensitive to blocking effects \cite{rappaport2013millimeter,andrews2017modeling}. In order to model this characteristic, a proper channel model should differentiate between the LOS and non-LOS (NLOS) channel models. Therefore, two different sets of parameters are considered for the LOS and NLOS mmWave links, and a deterministic function $P_{\rm LOS}(d_{ij}) \in [0,1]$, that is a non-increasing function of the link length $d_{ij}$ (in meters) between the nodes $i$ and $j$, is defined to characterize the probability of an arbitrary link of length $d_{ij}$ being LOS. 
In this paper,	we consider a generic function $P_{\rm LOS}(d_{ij})$ throughout our analysis and use the model $P_{\rm LOS}(d_{ij})={\rm e}^{-d_{ij}/200}$, suggested in \cite{andrews2017modeling}, for our numerical analysis.

Next, we briefly describe how the LOS and NLOS channels can be characterized.
Similar to \cite{andrews2017modeling}, we express the channel coefficient for an arbitrary mmWave link between the transmitter $i$ and receiver $j$ as $h_{ij}=\tilde{h}_{ij}\sqrt{{G}_{ij}L_{ij}}$, where $\tilde{h}_{ij}$, ${G}_{ij}$, and $L_{ij}$ are the channel fading coefficient, the total directivity gain (including both the transmitter and the receiver beamforming gains), and the path loss of the $i$-$j$ mmWave link, respectively. This model is widely used in the literature for analytical tractability purposes. The reader is referred to \cite{andrews2017modeling} and the references therein for more details on mmWave channel modeling and also the validity of this model.

\begin{table}[t]
	\centering
	\caption{Probability mass function (PMF) of the directivity gain of a node $q$ with beamsteering error \cite{di2015stochastic}.}
	\label{T1}
	\begin{tabular}{M{0.6cm}||M{2.1cm}M{3cm}}  
		$k$ & $1$ & $2$\\ \hline\hline  {\vspace{0.1cm}}
		$g^{(q)}_{k}$ {\vspace{0.05cm}} & $M^{(q)}_{\mathcal{X}}$ & $m^{(q)}_{\mathcal{X}}$ \\ \hline {\vspace{0.1cm}}
$b^{(q)}_{k}$ {\vspace{0.2cm}} &$F_{|\mathcal{E}^{(q)}_{\mathcal{X}}|}\left({\theta^{(q)}_{\mathcal{X}}}/{2}\right)$& $1-F_{|\mathcal{E}^{(q)}_{\mathcal{X}}|}\left({\theta^{(q)}_{\mathcal{X}}}/{2}\right)$	\\ \hline 
	\end{tabular}
\end{table}

{To characterize the path loss $L_{ij}$ of the $i$-$j$ link with the length $d_{ij}$, we consider different path loss exponents ($\alpha_{\rm L},\alpha_{\rm N}$) and intercepts ($C_{\rm L},C_{\rm N}$) for the LOS and NLOS links, respectively. Let $L^{(\rm L)}_{ij}$ and $L^{(\rm N)}_{ij}$ denote the path losses of the LOS and NLOS links, respectively. Then the path loss $L_{ij}$ is either equal to $L^{(\rm L)}_{ij}=C_{\rm L}d_{ij}^{-\alpha_{\rm L}}$ with probability $P_{\rm LOS}(d_{ij})$ or equal to $L^{(\rm N)}_{ij}=C_{\rm N}d_{ij}^{-\alpha_{\rm N}}$ with probability $1-P_{\rm LOS}(d_{ij})$.
Note that the path loss in the NLOS links can be much higher than that of the LOS path due to the weak diffractions in the mmWave bands \cite{rappaport2013millimeter}.

To ascertain the total directivity gain ${G}_{ij}$, we use the common sectored-pattern antenna model \cite{bai2015coverage,di2015stochastic} which approximates the actual array beam pattern by a step function, i.e., with a constant main lobe gain $M^{(q)}_{\mathcal{X}}$ over the beamwidth $\theta^{(q)}_{\mathcal{X}}$ and a constant side lobe gain $m^{(q)}_{\mathcal{X}}$ otherwise, where ${\mathcal{X}}\in\{{\rm TX,RX}\}$ and ${q}\in\{{i,j}\}$. Then, for a given link, if the spatial arrangement of the beams of the transmitter and receiver are known, the total directivity gain can be obtained from the product of the gains of the transmitter and receiver. If the main lobe of a node $q$ (either transmitter or receiver) is pointed to another node, 
we assume that an additive beamsteering error exists, denoted by a symmetric random variable (RV) $\mathcal{E}^{(q)}_{\mathcal{X}}$, in the vicinity of the transmitter-receiver direction. Same as in \cite{di2015stochastic}, it is assumed that node $q$ has a gain equal to $M^{(q)}_{\mathcal{X}}$ if $|\mathcal{E}^{(q)}_{\mathcal{X}}|<\theta^{(q)}_{\mathcal{X}}/2$, which occurs with probability $F_{|\mathcal{E}^{(q)}_{\mathcal{X}}|}(\theta^{(q)}_{\mathcal{X}}/2)$ with $F_X(x)$ being the cumulative distribution function (CDF) of the RV $X$. Otherwise, it has a gain equal to $m^{(q)}_{\mathcal{X}}$. 
Then the probability mass function (PMF) of the directivity gain of a node $q$ with beamsteering error can be expressed as a RV taking the values $g_k^{(q)}$ with probabilities $b_k^{(q)}$, $k\in\{1,2\}$, as summarized in Table \ref{T1}.

Finally, it is common in the literature to model the fading amplitude of mmWave links as independent Nakagami-distributed RVs with shape parameter $\nu\geq 1/2$ and scale parameter $\Omega=\E[|{\tilde{h}}_{ij}|^2]=1$, and consider different Nakagami parameters for the LOS and NLOS links as $\nu_{\rm L}$ and $\nu_{\rm N}$, respectively \cite{andrews2017modeling,bai2015coverage}. In the case of Nakagami-$m$ fading with parameters $\nu_{\mathcal{B}}$, $\mathcal{B}\in\{{\rm L},{\rm N}\}$, and $\Omega=1$, $|{\tilde{h}}_{ij}|^2$ has a normalized gamma distribution with shape and scale parameters of $\nu_{\mathcal{B}}$ and $1/\nu_{\mathcal{B}}$, respectively.
Therefore, the probability density function (PDF) of $|{\tilde{h}}_{ij}|^2$ is given by \cite{simon2005digital}
\begin{align}\label{pdf_gamma}
f_{|{\tilde{h}}_{ij}|^2}(y)=\frac{ {\nu_{\mathcal{B}}}^{\nu_{\mathcal{B}}} y^{\nu_{\mathcal{B}}-1}}{\Gamma(\nu_{\mathcal{B}})}\exp\left(-\nu_{\mathcal{B}}y\right).
\end{align}
As it will be clarified later, in order to derive tractable closed-form expressions, we will often assume in this paper that the shape parameter $\nu_{\mathcal{B}}$ is an integer.

Note that, from an information-theoretic perspective, mmWave communications, and in general wideband communications under power constraints, can be viewed as low-capacity scenarios \cite{fereydounian2018channel,jamali2020massive,jamali2018low} suggesting a natural framework for covert mmWave communication.}

\subsection{System Model}\label{Sec2B}
We consider the well-known setup for covert communication comprised of three parties: a transmitter Alice is intending to covertly communicate to a receiver Bob over the mmWave bands when a warden Willie is attempting to  detect the existence of this communication. Alice employs a dual-beam mmWave transmitter consisting of two antenna arrays. The first antenna array is used for the transmission to Bob while the second array is exploited as a jammer to enable positive-rate covert mmWave communication. 
{ Note that, although the number of antenna elements in mmWave arrays is typically large to compensate for the significant propagation loss through beamforming (directionality gain), the wavelengths are much smaller than that of the RF communication (e.g., $5$ \si{mm} at $60$ \si{GHz} versus $60$ \si{mm} at $5$ \si{GHz}). Therefore, it is feasible to realize large mmWave antenna arrays in a small package thanks to the recent advancements in antenna circuit design \cite{andrews2017modeling,hong2014study}. Additionally, it is practical to consider two separate antenna arrays in a given mmWave transmitter. In particular, a first-of-the-kind mmWave antenna system prototype has been presented in \cite{hong2014study} that integrated two separate mmWave antenna arrays, each of size $1\times 16$, inside a Samsung cell phone (one at the top and the other at the bottom of the cell phone). Moreover, the authors in \cite{rangan2014millimeter} proposed incorporating several mmWave antenna arrays throughout a mobile device to provide path diversity from blockage by human obstructions.}

Given the above transmission model, when Bob is not in the main lobe of the Alice-Willie link, he receives the jamming signal gained with the side lobe of the second array in addition to receiving the desired signal from Alice with the main lobe of the first array. Similarly, when Willie is not in the main lobe of the Alice-Bob link, he receives the desired signal gained with the side lobe of the first array in addition to receiving the jamming signal from Alice with the main lobe of the second array. On the other hand, when Bob is in the main lobe of the Alice-Willie link (or equivalently, Willie is in the main lobe of the Alice-Bob link), both of the received signals by Bob and Willie are gained with main lobes\footnote{{In such extreme cases, both Bob and Willie receive the desired signal with the same gain from Alice. Given the small wavelengths at the mmWave bands, one can exploit relatively large antenna arrays to realize three-dimensional (3D) beamforming \cite{razavizadeh2014three,forouzesh2020covert2} at Alice's first array to further focus the beam (in three dimensions) toward Bob and reduce the chance of Willie receiving the desired signal with the (large) main lobe gain. Further investigation on this direction is left for future work.}}. Throughout our analysis in Sections \ref{Sec3} and \ref{Sec4}, we assume that Alice, Bob, and Willie are in some fixed locations (hence, having some given directivity gains). And we leave the discussion about various realistic scenarios, such as imperfect knowledge of Willie's location, to Section \ref{Sec5}. 


Let the channel coefficients between Alice's first and second arrays and the node $j\in\{b,w\}$ (representing Bob and Willie) be denoted by $h_{aj,f}$ and $h_{aj,s}$, respectively. Then it can be observed that the path loss gains are the same, i.e., $L_{aj,f}=L_{aj,s}\triangleq L_{aj}$, while the fading gains $|\tilde{h}_{aj,f}|^2$ and $|\tilde{h}_{aj,s}|^2$ are independent normalized gamma RVs\footnote{{Note that the fading coefficients can be considered uncorrelated if the antenna arrays are spaced more than half a wavelength \cite{molisch2012wireless}. Given that the wavelengths are very small at the mmWave bands, e.g., $5$ \si{mm} at $60$ \si{GHz}, it is easy to realize tens of wavelengths of spacing between the arrays and ensure independence between the fading coefficients.}}. We assume quasi-static fading channels meaning that fading coefficients remain constant over a block of $n$ channel uses. We further assume that Alice transmits the desired signal with a publicly-known power $P_a$ while the jamming transmit power $P_J$ of the second array is not known and is changed independently across transmission blocks. In this paper, we assume that $P_J$ is drawn from a uniform distribution over the interval $[0,P_J^{\rm max}]$ while the results can be extended to other distributions using a similar approach. Let $G_{aj,f}$ and $G_{aj,s}$ denote the total directivity gains of the links between Alice's first and second arrays and the node $j\in\{b,w\}$, respectively. Then,
 the received signals by Bob and Willie at each channel use $i$, for $i=1,2,...,n$, are given by
\begin{align}
\mathbf{y}_{b}(i)=&\sqrt{P_aG_{ab,f}L_{ab}}~\tilde{h}_{ab,f}\mathbf{x}_a(i)\label{yb}\nonumber\\
&+\sqrt{P_JG_{ab,s}L_{ab}}~\tilde{h}_{ab,s}\mathbf{x}_J(i)+\mathbf{n}_b(i),\\
\mathbf{y}_{w}(i)=&\sqrt{P_aG_{aw,f}L_{aw}}~\tilde{h}_{aw,f}\mathbf{x}_a(i)\label{yw}\nonumber\\
&+\sqrt{P_JG_{aw,s}L_{aw}}~\tilde{h}_{aw,s}\mathbf{x}_J(i)+\mathbf{n}_w(i),
\end{align}
respectively, where $\mathbf{x}_a$ and $\mathbf{x}_J$ are the desired signal and the jamming signal, respectively, each having a zero-mean Gaussian distribution satisfying $\E[|\mathbf{x}_a(i)|^2]=\E[|\mathbf{x}_J(i)|^2]=1$. Moreover, $\mathbf{n}_b$ and $\mathbf{n}_w$ are zero-mean Gaussian noise components at Bob and Willie's receivers with  variances $\sigma^2_b$ and $\sigma^2_w$, respectively. 

Finally, note that the results derived in this paper can be applied to a similar system model, though with Rayleigh fading channels, by substituting $\nu_{\mathcal{B}}=1$. This is because the normalized gamma distribution simplifies to the exponential distribution with mean one in the special case of $\nu_{\mathcal{B}}=1$.

\section{Willie's Detection Error Rate}\label{Sec3}
As discussed earlier, Willie's goal is to detect whether Alice is transmitting to Bob or not. It is assumed that Willie has a perfect knowledge about the channel between himself and Alice, and applies binary hypothesis testing while being unaware of the value of $P_J$.
{The null hypothesis $\mathbb{H}_0$ states that Alice did not transmit to Bob, and the alternative hypothesis $\mathbb{H}_1$ specifies that a transmission from Alice to Bob occurred. Willie's decision of hypothesis $\mathbb{H}_1$ when $\mathbb{H}_0$ is true is referred to as a \textit{false alarm} and its probability is denoted by $P_{\rm FA}$. Moreover, Willie's decision in favor of $\mathbb{H}_0$ when $\mathbb{H}_1$ is true is referred to as a \textit{missed detection} with the probability denoted by $P_{\rm MD}$. Then Willie's overall detection error rate is defined as $P_{e,w}\triangleq P_{\rm FA}+P_{\rm MD}$. }
We say that a positive-rate covert communication is possible if for any $\epsilon>0$ there exists a positive-rate communication between Alice and Bob satisfying $P_{e,w}\geq1-\epsilon$ as the number of channel uses $n\to\infty$. 
{ In this section, we first derive the minimum value of $P_{e,w}$, denoted by $P_{e,w}^*$, under the assumption of complete knowledge of the channels and an optimal radiometer detector at Willie. We also assume that Willie observes infinitely large number of channel uses. It is worth mentioning that such assumptions correspond to the worst-case scenario for the covertness requirement as they result in the minimum error rate for Willie. We then derive the closed-form expression of the expected value of $P_{e,w}^*$ form Alice's perspective in Section \ref{Sec3B}.}
\subsection{$P_{e,w}$ with the Optimal Detector at Willie}\label{Sec3A}
As it is proved in \cite[Lemma 2]{sobers2017covert} for AWGN channels and also pointed out in \cite[Lemma 1]{shahzad2017covert}, the optimal decision rule that minimizes Willie's detection error
  is given by
\begin{align}\label{radiometer}
T_w\triangleq\frac{1}{n}\sum_{i=1}^{n}|\mathbf{y}_{w}(i)|^2
\underset{\mathbb{H}_0}{\overset{\mathbb{H}_1}{\gtrless}}\tau,
\end{align}
where $\tau$ is Willie's detection threshold for which we obtain the corresponding optimal value/range later in this subsection. Using \eqref{yw} and the definition of $T_w$ in \eqref{radiometer}, we can write $T_w$ under hypothesis $\mathbb{H}_0$, denoted by $T_w^{\mathbb{H}_0}$, as
\begin{align}\label{TH0_1}
T_w^{\mathbb{H}_0}=\left(P_JG_{aw,s}L_{aw}|\tilde{h}_{aw,s}|^2+\sigma^2_w\right)\frac{\chi^2_{2n}}{n},
\end{align}
where $\chi^2_{2n}$ denotes a chi-squared RV with $2n$ degrees of freedom. According to the strong law of large numbers, $\frac{\chi^2_{2n}}{n}$ converges to $1$, \textit{almost surely}, as $n\to\infty$. Therefore, using Lebesgue's dominated convergence theorem \cite{browder2012mathematical}, we cam replace $\frac{\chi^2_{2n}}{n}$ by $1$ to rewrite $T_w^{\mathbb{H}_0}$ as
\begin{align}\label{TH0}
T_w^{\mathbb{H}_0}=P_JG_{aw,s}L_{aw}|\tilde{h}_{aw,s}|^2+\sigma^2_w.
\end{align}
Similarly, $T_w$ under hypothesis $\mathbb{H}_1$ can be obtained as
\begin{align}\label{TH1}
T_w^{\mathbb{H}_1}&\!=\!P_aG_{aw,f}L_{aw}|\tilde{h}_{aw,f}|^2\!+\!P_JG_{aw,s}L_{aw}|\tilde{h}_{aw,s}|^2\!+\!\sigma^2_w.
\end{align}

{ One can observe that if Willie has a complete knowledge about the jamming power $P_J$, he can choose any threshold in the interval $T_w^{\mathbb{H}_0}\leq \tau \leq T_w^{\mathbb{H}_1}$ to achieve $P_{\rm FA}=P_{\rm MD}=0$, and hence $P_{e,w}=0$ (recall that we assumed Willie has full knowledge about the realization of the Alice-Willie channel to constitute a worst-case scenario for the covertness requirement). Alternatively, if $P_J$ is known to Willie with a probability $q>0$, then we cannot satisfy the covertness requirement for the $\epsilon$ values smaller than $q$. In other words, some sort of randomness is required in the system model to enable covert communication. In the following theorem, we characterize the optimal threshold of Willie's detector and its corresponding minimum detection error rate under the assumption that $P_J$ is completely unknown to Willie and changes randomly per transmission block according to a uniform distribution over the interval $[0,P_J^{\rm max}]$.}
\begin{theorem}\label{Thm1}
The optimal threshold $\tau^*$ for Willie's detector is in the interval
\begin{align}\label{tau*}
{\tau^*\in[\min(\lambda_1, \lambda_2), \max(\lambda_1, \lambda_2)],}
\end{align}
and  the corresponding minimum detection error rate is 
\begin{align}\label{pe}
P_{e,w}^*=\left\{\begin{matrix}
\hspace{-2.9cm}0, &\lambda_1<\lambda_2, \\ 
1-\frac{P_aG_{aw,f}|\tilde{h}_{aw,f}|^2}{P_J^{\rm max}G_{aw,s}|\tilde{h}_{aw,s}|^2}, &\lambda_1\geq\lambda_2,
\end{matrix}\right.
\end{align}
where $\lambda_1\triangleq P_J^{\rm max}G_{aw,s}L_{aw}|\tilde{h}_{aw,s}|^2+\sigma^2_w$ and  $\lambda_2\triangleq P_aG_{aw,f}L_{aw}|\tilde{h}_{aw,f}|^2+\sigma^2_w$.
\end{theorem}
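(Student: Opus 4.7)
The plan is to condition on the fading realizations (which are fixed for the block) so that under each hypothesis the detection statistic $T_w$ becomes an affine function of the single remaining random variable $P_J \sim \mathrm{Unif}[0,P_J^{\rm max}]$. From \eqref{TH0} and \eqref{TH1}, $T_w^{\mathbb{H}_0}$ is uniform on $[\sigma_w^2,\lambda_1]$ and $T_w^{\mathbb{H}_1}$ is uniform on $[\lambda_2,\lambda_2+(\lambda_1-\sigma_w^2)]$, where $\lambda_1,\lambda_2$ are as defined in the statement. From here I would compute $P_{\rm FA}(\tau)=\Pr(T_w^{\mathbb{H}_0}>\tau)$ and $P_{\rm MD}(\tau)=\Pr(T_w^{\mathbb{H}_1}<\tau)$ directly from these uniform CDFs.

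Next, I would do a case split on the relative order of $\lambda_1$ and $\lambda_2$. In the case $\lambda_1<\lambda_2$, the supports of $T_w^{\mathbb{H}_0}$ and $T_w^{\mathbb{H}_1}$ are disjoint, so \emph{any} $\tau\in[\lambda_1,\lambda_2]$ yields $P_{\rm FA}=P_{\rm MD}=0$ and hence $P_{e,w}^*=0$; for $\tau$ outside this interval, at least one of the two probabilities becomes strictly positive, so the interval $[\lambda_1,\lambda_2]=[\min(\lambda_1,\lambda_2),\max(\lambda_1,\lambda_2)]$ is exactly the optimum set. In the case $\lambda_1\geq\lambda_2$, the supports overlap on $[\lambda_2,\lambda_1]$, and for every $\tau$ in this overlap one finds
\begin{equation*}
P_{\rm FA}(\tau)=\frac{\lambda_1-\tau}{\lambda_1-\sigma_w^2},\quad P_{\rm MD}(\tau)=\frac{\tau-\lambda_2}{\lambda_1-\sigma_w^2},
\end{equation*}
whose sum is the $\tau$-independent constant $(\lambda_1-\lambda_2)/(\lambda_1-\sigma_w^2)$. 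Substituting the definitions of $\lambda_1,\lambda_2$ and simplifying (the $\sigma_w^2$ and $L_{aw}$ cancel) gives precisely $1-\frac{P_a G_{aw,f}|\tilde h_{aw,f}|^2}{P_J^{\rm max}G_{aw,s}|\tilde h_{aw,s}|^2}$, matching \eqref{pe}.

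To finish the characterization of $\tau^*$, I would verify that no $\tau$ outside the overlap beats this constant. For $\tau<\lambda_2$, $P_{\rm MD}=0$ while $P_{\rm FA}=(\lambda_1-\tau)/(\lambda_1-\sigma_w^2)\geq(\lambda_1-\lambda_2)/(\lambda_1-\sigma_w^2)$, with equality only at $\tau=\lambda_2$; an analogous monotonicity argument handles $\tau>\lambda_1$ by swapping the roles. This shows the minimizer set is exactly $[\lambda_2,\lambda_1]=[\min(\lambda_1,\lambda_2),\max(\lambda_1,\lambda_2)]$, unifying both cases into the statement \eqref{tau*}.

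The calculations are elementary once the two statistics are recognized as shifted uniforms, so there is no real technical obstacle; the main things to be careful about are (i) the clean cancellation that turns $P_{\rm FA}+P_{\rm MD}$ into a constant on the overlap, which is the crux of why a nontrivial $P_{e,w}^*$ emerges, and (ii) handling both orderings of $\lambda_1,\lambda_2$ uniformly so that the single expression $[\min(\lambda_1,\lambda_2),\max(\lambda_1,\lambda_2)]$ in \eqref{tau*} is justified in each regime. Edge values of $\tau$ at or beyond $\sigma_w^2$ or $\lambda_2+(\lambda_1-\sigma_w^2)$ are covered by the same monotonicity argument and need only a brief mention.
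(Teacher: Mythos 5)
Your proposal is correct and follows essentially the same route as the paper: both compute $P_{\rm FA}$ and $P_{\rm MD}$ as piecewise-linear functions of $\tau$ from the uniform distribution of $P_J$, split on the ordering of $\lambda_1$ and $\lambda_2$, and observe that $P_{\rm FA}+P_{\rm MD}$ is constant on the overlap $[\lambda_2,\lambda_1]$ and no smaller outside it. Your framing of $T_w^{\mathbb{H}_0}$ and $T_w^{\mathbb{H}_1}$ as shifted uniforms is just a cleaner way of stating what the paper derives in \eqref{pfa}--\eqref{Pew}, so there is nothing substantive to add.
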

\begin{proof}
Using \eqref{TH0}, the false alarm probability is given by
\begin{align}\label{pfa}
P_{\rm FA}&=\Pr\left(T_w^{\mathbb{H}_0}>\tau\right)=\Pr\left(P_J>\frac{\tau-\sigma^2_w}{G_{aw,s}L_{aw}|\tilde{h}_{aw,s}|^2}\right)\nonumber\\
&=\left\{\begin{matrix}
\hspace{-3.5cm}1, &\tau<\sigma^2_w, \\ 
1-\frac{\tau-\sigma^2_w}{P_J^{\rm max}G_{aw,s}L_{aw}|\tilde{h}_{aw,s}|^2}, &\sigma^2_w\leq\tau\leq\lambda
_1,\\
\hspace{-3.5cm}0,&\tau\geq\lambda_1.
\end{matrix}\right.
\end{align}
Also, by \eqref{TH1} the missed detection probability is given by
\begin{align}\label{pmd}
P_{\rm MD}&=\Pr\left(T_w^{\mathbb{H}_1}<\tau\right)=\Pr\left(P_J<\frac{\tau-\lambda_2}{G_{aw,s}L_{aw}|\tilde{h}_{aw,s}|^2}\right)\nonumber\\
&=\left\{\begin{matrix}
\hspace{-2.8cm}0, &\tau<\lambda_2, \\ 
\frac{\tau-\lambda_2}{P_J^{\rm max}G_{aw,s}L_{aw}|\tilde{h}_{aw,s}|^2}, &\lambda_2\leq\tau\leq\lambda
_3,\\
\hspace{-2.8cm}1,&\tau\geq\lambda_3,
\end{matrix}\right.
\end{align}
where $\lambda_3\triangleq \lambda_2+P_J^{\rm max}G_{aw,s}L_{aw}|\tilde{h}_{aw,s}|^2$. 
Next, we consider the following two cases.

\textit{Case I:} When $\lambda_1<\lambda_2$, Willie's receiver can choose any thresholds in the interval $[\lambda_1,\lambda_2]$ to get both $P_{\rm FA}=0$ and $P_{\rm MD}=0$, resulting in zero detection error $P_{e,w}\triangleq P_{\rm FA}+P_{\rm MD}$.

\textit{Case II:} When $\lambda_1\geq\lambda_2$, we can write the overall detection error rate $P_{e,w}\triangleq P_{\rm FA}+P_{\rm MD}$, using \eqref{pfa} and \eqref{pmd}, as
\begin{align}\label{Pew}
P_{e,w}=\left\{\begin{matrix}
\hspace{-3.45cm}1, &\tau\leq\sigma^2_w, \\ 
1-\frac{\tau-\sigma^2_w}{P_J^{\rm max}G_{aw,s}L_{aw}|\tilde{h}_{aw,s}|^2}, &\sigma^2_w\leq\tau\leq\lambda
_2,\\
\hspace{-0.52cm}1-\frac{P_aG_{aw,f}|\tilde{h}_{aw,f}|^2}{P_J^{\rm max}G_{aw,s}|\tilde{h}_{aw,s}|^2}, &\lambda_2\leq\tau\leq\lambda
_1,\\
\hspace{-0.6cm}\frac{\tau-\lambda_2}{P_J^{\rm max}G_{aw,s}L_{aw}|\tilde{h}_{aw,s}|^2}, &\lambda_1\leq\tau\leq\lambda
_3,\\
\hspace{-3.45cm}1,&\tau\geq\lambda_3.
\end{matrix}\right.
\end{align}
Therefore, based on \eqref{Pew}, the receiver never chooses $\tau\leq\sigma^2_w$ or $\tau\geq\lambda_3$ since they result in the worst performance $P_{e,w}=1$. Moreover, \eqref{Pew} monotonically decreases, with respect to $\tau$, in the interval $\sigma^2_w\leq\tau\leq\lambda
_2$ until it reaches the constant value corresponding to $P_{e,w}$ in the interval $\lambda_2\leq\tau\leq\lambda
_1$, and then it monotonically increases in the interval $\lambda_1\leq\tau\leq\lambda
_3$ until it reaches $1$. Therefore, the constant value of the detection error rate in the interval $\lambda_2\leq\tau\leq\lambda
_1$ is the minimum value of $P_{e,w}$ for $\lambda_1\geq\lambda_2$ that can be attained using any threshold in the interval $[\lambda_2,\lambda_1]$.
\end{proof}
\noindent\textbf{Remark 1.} Eq. \eqref{pe} shows that for small values of  $P_J^{\rm max}$ with $P_J^{\rm max}G_{aw,s}|\tilde{h}_{aw,s}|^2\leq P_aG_{aw,f}|\tilde{h}_{aw,f}|^2$ Willie can attain a zero error rate negating the possibility of achieving a positive-rate covert communication as $n\to\infty$. Although increasing $P_J^{\rm max}$ beyond $P_aG_{aw,f}|\tilde{h}_{aw,f}|^2/(G_{aw,s}|\tilde{h}_{aw,s}|^2)$ can increase $P^*_{e,w}$ and enable a positive-rate covert communication ($P^*_{e,w}\to1$ as $P_J^{\rm max}\to\infty$), it also degrades the performance of the desired Alice-Bob link as we will see in Section \ref{Sec4}. The superiority of covert mmWave communication to that of omni-directional RF communication becomes then apparent by observing the beneficial impact of beamforming. In fact, in the received signal by Willie, $P_J$ is gained by $G_{aw,s}$ which is much larger than the gain $G_{aw,f}$ of $P_a$; this simultaneously increases the jamming signal and decreases the desired signal received by Willie, i.e., significantly degrades the performance of Willie's detector. It will be shown in Section \ref{Sec4} that an opposite situation happens for the Alice-Bob link where the desired signal is gained with $G_{ab,f}$ which is much larger than the gain $G_{ab,s}$ of the jamming signal. 
\begin{figure*}[b]
	\hrulefill
	\normalsize
	\setcounter{equation}{12}
	\begin{align}\label{Epew}
		\!\!\!\!\!\!\!\!	\E[P^*_{e,w}]\!=\!\!\!\!\!\!\!\sum_{\mathcal{B}\in\{{\rm L,N}\}}\!\!\!\!\!\!P_{aw}(\mathcal{B})\!\sum_{k=1}^{2}b_k^{(a,s)}\!\bigg[1\!+\!S(\nu_{\mathcal{B}},g_k^{(a,s)})\bigg]\!\!\times\!\!\bigg[1\!-\!S(\nu_{\mathcal{B}},g_k^{(a,s)})\!+\!\frac{P_am_{a,f}\nu_{\mathcal{B}}^{\nu_{\mathcal{B}}}}{P_J^{\rm max}g_k^{(a,s)}\eta_{\mathcal{B}}\Gamma(\nu_{\mathcal{B}})}\!\sum_{l=1}^{\nu_{\mathcal{B}}}\!\!\binom{\nu_{\mathcal{B}}}{l}\frac{(-1)^{l}}{l}	I(\nu_{\mathcal{B}},l,g_k^{(a,s)})\bigg]\!.\!\!
	\end{align}
\end{figure*}
\subsection{$\E[P^*_{e,w}]$ From Alice's Perspective}\label{Sec3B}
{Given that Willie is a passive node, we make the realistic assumption that Alice and Bob are unaware of the instantaneous realization of the channel between Alice and Willie. Therefore, they should rely on the expected value of $P_{e,w}^*$.}
Note also that the minimum error rate $P_{e,w}^*$ in \eqref{pe} is independent of the beamforming gain of Willie's receiver as it cancels out in the ratio of $G_{aw,f}/G_{aw,s}$ and also in  the comparison between $\lambda_1$ and $\lambda_2$. Furthermore, Alice perfectly knows the gain $m_{a,f}$ of the side lobe of her first array to Willie. However, she has uncertainty about the gain $g^{(a,s)}$ of the main lobe of the second array toward Willie due to the misalignment error; it is either $g_1^{(a,s)}\triangleq M_{a,s}$ with probability $b_1^{(a,s)}\triangleq F_{|\mathcal{E}_{a,s}|}\left({\theta_{a,s}}/{2}\right)$ or $g_2^{(a,s)}\triangleq m_{a,s}$ with probability $b_2^{(a,s)}\triangleq 1-F_{|\mathcal{E}_{a,s}|}\left({\theta_{a,s}}/{2}\right)$. Moreover, Alice and Bob do not know whether the Alice-Willie link is LOS or NLOS; hence, they should take into account two possibilities given the LOS probability $P_{\rm LOS}(d_{aw})$. In the following theorem, we characterize the expected value of $P^*_{e,w}$ form Alice's perspective in a closed form.
\begin{theorem}\label{col2}
The expected value of $P^*_{e,w}$ form Alice's perspective is characterized as \eqref{Epew}, shown at the bottom of this page,
where $P_{aw}({\rm L})\triangleq P_{\rm LOS}(d_{aw})$, $P_{aw}({\rm N})\triangleq1-P_{\rm LOS}(d_{aw})$, $\Gamma(\cdot)$ is the gamma function \cite[Eq. (8.310.1)]{gradshteyn2014table}, and $g_k^{(a,s)}$ and $b_k^{(a,s)}$ are defined above for $k\in\{1,2\}$. Moreover, the function $S(\nu_{\mathcal{B}},g_k^{(a,s)})$ is defined as
\begin{align}	\setcounter{equation}{13}
\!S(\nu_{\mathcal{B}},g_k^{(a,s)})\!\triangleq\!\sum_{l=1}^{\nu_{\mathcal{B}}}\!\binom{\nu_{\mathcal{B}}}{l}\!(-1)^l\left(\!1\!+\!l\frac{\eta_{\mathcal{B}}P_J^{\rm max}g_k^{(a,s)}}{P_am_{a,f}\nu_{\mathcal{B}}}\!\right)^{\!\!-\nu_{\mathcal{B}}}\!\!\!\!\!,
\end{align}
and
$I(\nu_{\mathcal{B}},l,g_k^{(a,s)})$, for $\nu_{\mathcal{B}}=1$ and $\nu_{\mathcal{B}}\geq2$, is defined as
\begin{align}
I(1,l,g_k^{(a,s)})&\triangleq\ln\left(1+l\frac{P_J^{\rm max}g_k^{(a,s)}}{P_am_{a,f}}\right),\label{I1}\\
I(\nu_{\mathcal{B}}\geq2,l,g_k^{(a,s)})&\triangleq\frac{(\nu_{\mathcal{B}}-2)!}{\nu_{\mathcal{B}}^{\nu_{\mathcal{B}}-1}}\bigg[1\nonumber\\
&\hspace{-0.5cm}-\left(1+l\frac{\eta_{\mathcal{B}}P_J^{\rm max}g_k^{(a,s)}}{P_am_{a,f}\nu_{\mathcal{B}}}\right)^{\!\!-\nu_{\mathcal{B}}+1}\bigg].
\end{align}
\end{theorem}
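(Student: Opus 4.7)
The plan is to apply iterated expectations over each of Alice's independent sources of uncertainty about the Alice--Willie channel. By the law of total probability applied to whether the link is LOS or NLOS, one obtains the outer sum $\sum_{\mathcal{B}\in\{{\rm L},{\rm N}\}}P_{aw}(\mathcal{B})$. Next, conditioning on the random main-lobe gain $G_{aw,s}$ of Alice's second array toward Willie, which takes value $g_k^{(a,s)}$ with probability $b_k^{(a,s)}$ due to beamsteering error (cf.\ Table~\ref{T1}), produces the inner sum. As noted just before the theorem statement, $G_{aw,f}=m_{a,f}$ is deterministic from Alice's viewpoint because Willie lies outside the main lobe of the first array, and both $L_{aw}$ and Willie's own receive gain cancel inside the indicator $\{\lambda_1\ge\lambda_2\}$ and the ratio in \eqref{pe}. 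For each $(\mathcal{B},k)$, the conditional problem is to evaluate
\begin{equation*}
\E\bigl[(1-R)\mathbf{1}\{R\le 1\}\bigr],\ \text{with}\ R=\frac{P_am_{a,f}Y}{P_J^{\rm max}g_k^{(a,s)}X},
\end{equation*}
where $X=|\tilde h_{aw,s}|^2$ and $Y=|\tilde h_{aw,f}|^2$ are independent normalized Gamma random variables with integer shape parameter $\nu_{\mathcal{B}}$ and density \eqref{pdf_gamma}.

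I would split this conditional expectation as $\Pr(R\le 1)-\E[R\mathbf{1}\{R\le 1\}]$ and evaluate each piece by iterated integration. For $\Pr(R\le 1)=\Pr(Y\le cX)$ with $c$ proportional to $P_J^{\rm max}g_k^{(a,s)}/(P_am_{a,f})$, I would use the integer-shape Gamma CDF expansion $F_Y(cx)=1-e^{-\nu_{\mathcal{B}} cx}\sum_{j=0}^{\nu_{\mathcal{B}}-1}(\nu_{\mathcal{B}} cx)^j/j!$ and integrate against $f_X(x)$; the resulting sum of elementary Gamma integrals of the form $\int_0^\infty x^{j+\nu_{\mathcal{B}}-1}e^{-\nu_{\mathcal{B}}(1+c)x}\,dx$ can be re-indexed via a binomial identity into the compact form $1+S(\nu_{\mathcal{B}},g_k^{(a,s)})$. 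For $\E[R\mathbf{1}\{R\le 1\}]$ I would again condition on $X$ and perform the $Y$-integral on $[0,cX]$ first; after expanding the resulting incomplete Gamma via the binomial theorem, the $1/X$ factor carried by $R$ leaves a residual $X$-integral whose integrand behaves like $x^{\nu_{\mathcal{B}}-2}e^{-\nu_{\mathcal{B}}(1+lc)x}$ for each binomial index $l$.

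This residual $X$-integral is the one genuinely non-routine step and is precisely where the two-branch definition of $I(\nu_{\mathcal{B}},l,g_k^{(a,s)})$ originates. For $\nu_{\mathcal{B}}=1$ the exponent $\nu_{\mathcal{B}}-2=-1$ forces a Frullani-type integral, $\int_0^\infty(e^{-\nu_{\mathcal{B}} x}-e^{-\nu_{\mathcal{B}}(1+lc)x})/x\,dx=\ln(1+lc)$, yielding \eqref{I1}; for $\nu_{\mathcal{B}}\ge 2$ the integrand is integrable at the origin and a power-rule primitive produces the $(1+\cdot)^{-\nu_{\mathcal{B}}+1}$ form in the second branch. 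I expect this $1/X$ step to be the main technical obstacle; everything else reduces to standard Gamma/binomial bookkeeping. The final step is an algebraic rearrangement that collects constants, re-indexes the binomial sums, and recasts $\Pr(R\le 1)-\E[R\mathbf{1}\{R\le 1\}]$ in the factorized shape $[1+S]\times[1-S+(\text{series in }I)]$ to match \eqref{Epew}.
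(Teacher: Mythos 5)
Your overall architecture tracks the paper's proof closely: the same iterated conditioning on the blockage state and on the beamsteering-error gain, the same reduction of \eqref{pe} to $\E[(1-R)\mathbf{1}\{R\le 1\}]$ with $R$ a ratio of independent normalized Gamma variables, and the same identification of the $1/X$ factor as the crux, with the $\nu_{\mathcal{B}}=1$ branch handled by a Frullani-type limit (the paper phrases it as a difference of exponential integrals) and the $\nu_{\mathcal{B}}\ge 2$ branch by an elementary Gamma integral. The genuine gap is in how you expand the Gamma CDF. You propose the \emph{exact} Erlang form $F_Y(cx)=1-e^{-\nu_{\mathcal{B}}cx}\sum_{j=0}^{\nu_{\mathcal{B}}-1}(\nu_{\mathcal{B}}cx)^j/j!$, whereas the paper uses Alzer's lemma \cite{alzer1997some}, \cite[Lemma 6]{bai2015coverage}, i.e., the tight approximation $F_X(x)\approx[1-e^{-\eta_{\mathcal{B}}x}]^{\nu_{\mathcal{B}}}$ with $\eta_{\mathcal{B}}=\nu_{\mathcal{B}}(\nu_{\mathcal{B}}!)^{-1/\nu_{\mathcal{B}}}$, expanded by the binomial theorem as in \eqref{FXx} and then evaluated via the Gamma MGF. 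The parameter $\eta_{\mathcal{B}}$ appearing throughout $S$ and $I$ has no other source. Your exact route gives, e.g., $\Pr(Y\le cX)=1-(1+c)^{-\nu_{\mathcal{B}}}\sum_{j=0}^{\nu_{\mathcal{B}}-1}\binom{\nu_{\mathcal{B}}+j-1}{j}\,c^j(1+c)^{-j}$, which coincides with $1+S(\nu_{\mathcal{B}},g_k^{(a,s)})$ only for $\nu_{\mathcal{B}}=1$ and differs for all $\nu_{\mathcal{B}}\ge 2$; so the ``re-indexing via a binomial identity into $1+S$'' step would fail, and your final expression, while a valid exact formula, would not be \eqref{Epew}.

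A second, smaller issue is your closing claim that $\Pr(R\le 1)-\E[R\,\mathbf{1}\{R\le 1\}]$ can be recast by algebra into the product form $[1+S]\times[1-S+(\text{series in }I)]$. It cannot: a difference of two unconditional quantities does not factor that way. The paper reaches the product by writing $\E[P^{\rm C}_{e,w}]=\Pr(\lambda_1^{\rm C}\ge\lambda_2^{\rm C})\bigl(1-C_1^{-1}\E_{\lambda_1^{\rm C}\ge\lambda_2^{\rm C}}[Y/X]\bigr)$ in \eqref{Epew1}, i.e., a probability multiplying a \emph{conditional} expectation, and then substituting the double integral of \eqref{eqcolproof3} for the conditional term. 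If you keep your (correct) unconditional decomposition, you must either carry the normalization by $\Pr(\lambda_1^{\rm C}\ge\lambda_2^{\rm C})$ explicitly when comparing with the truncated-moment integral, or switch to the paper's conditional bookkeeping; otherwise the final ``algebraic rearrangement'' you invoke is not available.
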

\begin{proof}
Let $P^{\rm C}_{e,w}$, $\lambda_1^{\rm C}$, and $\lambda_2^{\rm C}$ denote the values of $P^*_{e,w}$, $\lambda_1$, and $\lambda_2$, respectively, conditioned on the blockage instance $\mathcal{B}\in\{{\rm L,N}\}$ and the gain $g^{(a,s)}$ of Alice's second array to Willie. Then using \eqref{pe} we have
\begin{align}\label{Epew1}
&\hspace{-0.2cm}\E[P^{\rm C}_{e,w}]\nonumber\\
&\hspace{0.1cm}=\!\E_{\lambda_1^{\rm C}<\lambda_2^{\rm C}}[P^{\rm C}_{e,w}]\!\Pr(\lambda^{\rm C}_1\!\!<\!\!\lambda^{\rm C}_2)\!+\!\E_{\lambda^{\rm C}_1\geq\lambda^{\rm C}_2}[P^{\rm C}_{e,w}]\!\Pr(\lambda^{\rm C}_1\!\!\geq\!\!\lambda^{\rm C}_2)\!\nonumber\\
&\hspace{0.1cm}=\Pr(\lambda^{\rm C}_1\!\geq\!\lambda^{\rm C}_2)\!\left(\!1\!-\!\frac{P_am_{a,f}}{P_J^{\rm max}g^{(a,s)}}\E_{\lambda^{\rm C}_1\geq\lambda^{\rm C}_2}\!\left[\!\frac{|\tilde{h}^{(\mathcal{B})}_{aw,f}|^2}{|\tilde{h}^{(\mathcal{B})}_{aw,s}|^2}\!\right]\!\right)\!.\!
\end{align}
The closed form of $\Pr(\lambda^{\rm C}_1\geq\lambda^{\rm C}_2)$ is derived as
\begin{align}\label{eqcolproof2}
\Pr(\lambda^{\rm C}_1\!\geq\!\lambda^{\rm C}_2)&=\Pr\left(|\tilde{h}^{(\mathcal{B})}_{aw,f}|^2\leq\frac{P_J^{\rm max}g^{(a,s)}}{P_am_{a,f}}|\tilde{h}^{(\mathcal{B})}_{aw,s}|^2\right)\nonumber\\
&\hspace{-1.7cm}\stackrel{(a)}{=}\sum_{l=0}^{\nu_{\mathcal{B}}}\!\binom{\nu_{\mathcal{B}}}{l}\!(-1)^l\E_{|\tilde{h}^{(\mathcal{B})}_{aw,s}|^2}\!\!\left[\exp\!\left(\!-\eta_{\mathcal{B}}l\frac{P_J^{\rm max}g^{(a,s)}}{P_am_{a,f}}|\tilde{h}^{(\mathcal{B})}_{aw,s}|^2\!\right)\!\right]
\nonumber\\
&\hspace{-1.7cm}\stackrel{(b)}{=}\sum_{l=0}^{\nu_{\mathcal{B}}}\!\binom{\nu_{\mathcal{B}}}{l}\!(-1)^l\left(1+l\frac{\eta_{\mathcal{B}}P_J^{\rm max}g^{(a,s)}}{P_am_{a,f}\nu_{\mathcal{B}}}\right)^{\!\!-\nu_{\mathcal{B}}},
\end{align}
where step $(a)$ follows from Alzer's lemma \cite{alzer1997some}, \cite[Lemma 6]{bai2015coverage} for a normalized gamma RV $X\sim{\rm Gamma}(\nu_{\mathcal{B}},1/\nu_{\mathcal{B}})$, which states that $\Pr\left(X<x\right)$ is tightly approximated by $\left[1-\exp(-\eta_{\mathcal{B}} x)\right]^{\nu_{\mathcal{B}}}$ where $\eta_{{\mathcal{B}}}=\nu_{\mathcal{B}}(\nu_{\mathcal{B}}!)^{-1/\nu_{\mathcal{B}}}$, and then applying the binomial theorem assuming $\nu_{\mathcal{B}}$ is an integer \cite{bai2015coverage}, i.e., 
\begin{align}\label{FXx}
F_X(x)=\sum_{l=0}^{\nu_{\mathcal{B}}}\binom{\nu_{\mathcal{B}}}{l}(-1)^{l}{\rm e}^{-l\eta_{\mathcal{B}}x}.    
\end{align}
Moreover, step $(b)$ is derived using the moment generating function (MGF) of a normalized gamma RV $X$, i.e., $\E[{\rm e}^{tX}]=(1-t/\nu_{\mathcal{B}})^{-\nu_{\mathcal{B}}}$ for any $t<\nu_{\mathcal{B}}$.

Moreover, for the expectation term in \eqref{Epew1} we have
\begin{align}\label{eqcolproof3}
&\hspace{-0.2cm}\E_{\lambda^{\rm C}_1\geq\lambda^{\rm C}_2}\!\left[\frac{|\tilde{h}^{(\mathcal{B})}_{aw,f}|^2}{|\tilde{h}^{(\mathcal{B})}_{aw,s}|^2}\right]\nonumber\\
&\hspace{0.5cm}=\E\!\left[\frac{|\tilde{h}^{(\mathcal{B})}_{aw,f}|^2}{|\tilde{h}^{(\mathcal{B})}_{aw,s}|^2}{\Bigg|}|\tilde{h}^{(\mathcal{B})}_{aw,f}|^2\leq\frac{P_J^{\rm max}g^{(a,s)}}{P_am_{a,f}}|\tilde{h}^{(\mathcal{B})}_{aw,s}|^2\right]\nonumber\\
&\hspace{0.5cm}=\int_{0}^{\infty}\frac{f_{|\tilde{h}^{(\mathcal{B})}_{aw,s}|^2}(y)}{y}\Bigg[\underbrace{\int_{0}^{C_1y}xf_{|\tilde{h}^{(\mathcal{B})}_{aw,f}|^2}(x)dx}_{V_1}\Bigg]dy,
\end{align}
where $C_1\triangleq\frac{P_J^{\rm max}g^{(a,s)}}{P_am_{a,f}}$, and $f_{|\tilde{h}^{(\mathcal{B})}_{aw,f}|^2}(x)$ and $f_{|\tilde{h}^{(\mathcal{B})}_{aw,s}|^2}(y)$ are the PDFs of the fading coefficients $|\tilde{h}^{(\mathcal{B})}_{aw,f}|^2$ and $|\tilde{h}^{(\mathcal{B})}_{aw,s}|^2$, respectively. Applying the part-by-part integration rule to $V_1$ and then using Alzer's lemma together with the binomial theorem as \eqref{FXx} yields
\begin{align}\label{eqcolproof4}
V_1=&C_1y\sum_{l_1=0}^{\nu_{\mathcal{B}}}\!\binom{\nu_{\mathcal{B}}}{l_1}\!(-1)^{l_1}{\rm e}^{-l_1\eta_{\mathcal{B}}C_1y}\nonumber\\
&-C_1y-\sum_{l_2=1}^{\nu_{\mathcal{B}}}\!\binom{\nu_{\mathcal{B}}}{l_2}\frac{(-1)^{l_2}}{\eta_{\mathcal{B}}l_2}\left[1-{\rm e}^{-l_2\eta_{\mathcal{B}}C_1y}\right].
\end{align}
By plugging \eqref{eqcolproof4} into \eqref{eqcolproof3}, using the MGF of the normalized gamma RV $|\tilde{h}^{(\mathcal{B})}_{aw,s}|^2$, and then noting that $f_{|\tilde{h}^{(\mathcal{B})}_{aw,s}|^2}(y)=\nu_{\mathcal{B}}^{\nu_{\mathcal{B}}}y^{\nu_{\mathcal{B}}-1}{\rm e}^{-\nu_{\mathcal{B}}y}/\Gamma(\nu_{\mathcal{B}})$ we have
\begin{align}\label{eqcolproof5}
&\!\!\!\E_{\lambda^{\rm C}_1\geq\lambda^{\rm C}_2}\!\left[\!\frac{|\tilde{h}^{(\mathcal{B})}_{aw,f}|^2}{|\tilde{h}^{(\mathcal{B})}_{aw,s}|^2}\!\right]\!=\!C_1\!\sum_{l_1=1}^{\nu_{\mathcal{B}}}\!\binom{\nu_{\mathcal{B}}}{l_1}\!(-1)^{l_1}\!\left(\!1\!+\!l_1\frac{\eta_{\mathcal{B}}C_1}{\nu_{\mathcal{B}}}\right)^{\!\!-\nu_{\mathcal{B}}}\nonumber\\
&-\sum_{l_2=1}^{\nu_{\mathcal{B}}}\!\binom{\nu_{\mathcal{B}}}{l_2}\frac{(-1)^{l_2}\nu_{\mathcal{B}}^{\nu_{\mathcal{B}}}}{\eta_{\mathcal{B}}l_2\Gamma(\nu_{\mathcal{B}})}\bigg[\int_{0}^{\infty}{y^{\nu_{\mathcal{B}}-2}}{\rm e}^{-\nu_{\mathcal{B}}y}dy\nonumber\\
&\hspace{3cm}-\int_{0}^{\infty}{y^{\nu_{\mathcal{B}}-2}}{\rm e}^{-(l_2\eta_{\mathcal{B}}C_1+\nu_{\mathcal{B}})y}dy\bigg].
\end{align}
Now given that the parameter $\nu_{\mathcal{B}}$ of Nakagami-$m$ fading is always greater than or equal  to $0.5$ and is assumed to be an integer here, we have $\nu_{\mathcal{B}}\in\mathbb{N}$ where $\mathbb{N}$  stands for the set of natural numbers.
For $\nu_{\mathcal{B}}\geq2$, by \cite[Eq. (3.351.3)]{gradshteyn2014table} we have $\int_{0}^{\infty}y^{\nu_{\mathcal{B}}-2}{\rm e}^{-\alpha y}dy=(\nu_{\mathcal{B}}-2)!/\alpha^{\nu_{\mathcal{B}}-1}$ for any $\alpha \in \R^+$. On the other hand, for $\nu_{\mathcal{B}}=1$ using \cite[Eq. (2.325.1)]{gradshteyn2014table} we have $\int_{0}^{\infty}y^{-1}{\rm e}^{-\alpha y}dy={\rm Ei}(-\alpha y)|_0^{\infty}$, {where ${\rm Ei}(\cdot)$ is the exponential integral function defined as \cite[Eq. (8.211.1)]{gradshteyn2014table} for negative arguments}. Therefore, following a similar approach to the proof of \cite[Corollary 2]{jamali2019uplink} we can calculate the difference of the two integrals in \eqref{eqcolproof5} as $\lim_{y\to0}[{\rm Ei}(-(l_2\eta_{\mathcal{B}}C_1+\nu_{\mathcal{B}})y)-{\rm Ei}(-\nu_{\mathcal{B}}y)]=\ln([l_2\eta_{\mathcal{B}}C_1+\nu_{\mathcal{B}}]/\nu_{\mathcal{B}})$ which is equal to $\ln(1+l_2C_1)$ for $\nu_{\mathcal{B}}=1$ (note that $\eta_{\mathcal{B}}=1$ for $\nu_{\mathcal{B}}=1$, and ${\rm Ei}(-\infty)=0$). This completes the proof of the theorem given  the definition of $I(\nu_{\mathcal{B}},l,g^{(a,s)})$ in \Tref{col2}. 
\end{proof}

\noindent\textbf{Remark 2.} In \Tref{col2}, it is assumed that Willie is not in the main lobe of Alice's first antenna array and hence, receives the desired signal by a side lobe gain $m_{a,f}$. However, if Willie is within the main lobe of the first array, we should include another averaging over the gain $g^{(a,f)}$ of the first array given the beamsteering error, i.e., that gain is either $g_1^{(a,f)}\triangleq M_{a,f}$ with probability $b_1^{(a,f)}\triangleq F_{|\mathcal{E}_{a,f}|}\left({\theta_{a,f}}/{2}\right)$ or $g_2^{(a,f)}\triangleq m_{a,f}$ with probability $b_2^{(a,f)}\triangleq 1-F_{|\mathcal{E}_{a,f}|}\left({\theta_{a,f}}/{2}\right)$.

\section{Performance of the  Alice-Bob Link}\label{Sec4}
In this section, we characterize performance metrics of the Alice-Bob link including its outage probability, maximum effective covert rate (i.e., the rate for which Alice can reliably communicate with Bob while maintaining $\E[P^*_{e,w}]\geq 1-\epsilon$ for any given $\epsilon>0)$, and ergodic capacity.
\subsection{Outage Probability}\label{Sec4A}
We assume that Alice targets a rate $R_b$ requiring the Alice-Bob link to meet a threshold signal-to-interference-plus-noise ratio (SINR) $\gamma_{\rm th}\triangleq2^{R_b}-1$. Then the outage probability $P_{\rm out}^{\rm AB}\triangleq\Pr({\gamma_{ab}}<\gamma_{\rm th})$ in achieving $R_b$ is characterized, in a closed form, in \Tref{thm3}, where the SINR $\gamma_{ab}$ of the  Alice-Bob link is given as follows by using \eqref{yb}:
\begin{align}\label{gab}
\gamma_{ab}=\frac{{P_aG_{ab,f}L_{ab}}|\tilde{h}_{ab,f}|^2}{{P_JG_{ab,s}L_{ab}}|\tilde{h}_{ab,s}|^2+\sigma^2_b}.
\end{align}
Note that in addition to $|\tilde{h}_{ab,f}|^2$, $|\tilde{h}_{ab,s}|^2$, and $P_J$, the blockage instance $\mathcal{B}\in\{{\rm L},{\rm N}\}$ and the antenna gains can also change randomly across transmission blocks. In particular, while we assume that the jamming signal arrives with the deterministic side lobe gain $m_{a,s}$, there are still uncertainties in the gains of Alice's first array and Bob's receiver (they are pointing their main lobes) due to the beamsteering error. Therefore, the gain $g^{(a,f)}$ of the main lobe of Alice's first array pointed to Bob is either $g_1^{(a,f)}\triangleq M_{a,f}$ with probability $b_1^{(a,f)}\triangleq F_{|\mathcal{E}_{a,f}|}\left({\theta_{a,f}}/{2}\right)$ or $g_2^{(a,f)}\triangleq m_{a,f}$ with probability $b_2^{(a,f)}\triangleq 1-F_{|\mathcal{E}_{a,f}|}\left({\theta_{a,f}}/{2}\right)$. Similarly, the gain $g^{(b)}$ of Bob's receiver is either $g_1^{(b)}\triangleq M_{b}$ with probability $b_1^{(b)}\triangleq F_{|\mathcal{E}_{b}|}\left({\theta_{b}}/{2}\right)$ or $g_2^{(b)}\triangleq m_{b}$ with probability $b_2^{(b)}\triangleq 1-F_{|\mathcal{E}_{b}|}\left({\theta_{b}}/{2}\right)$. Furthermore, in \Tref{thm3} we assume that Willie is not in the main lobe of Alice's first array. However, if Willie is in the Alice-Bob direction, we should include another averaging of the gain of Alice's second array carrying the jammer signal, i.e., instead of a deterministic $m_{a,s}$ we should consider two possibilities $g_k^{(a,s)}$ with probabilities $b_k^{(a,s)}$, $k\in\{1,2\}$, defined in Section III-B.

\begin{theorem}\label{thm3}
The outage probability of the Alice-Bob link in achieving the target rate $R_b\triangleq\log_2(1+\gamma_{\rm th})$ is given by
\begin{align}\label{pout}
&P_{\rm out}^{\rm AB}\!=\!\!\!\!\sum_{\mathcal{B}\in\{{\rm L,N}\}}\!\!\!\!P_{ab}(\mathcal{B})\sum_{k_1=1}^{2}b_{k_1}^{(a,f)}\sum_{k_2=1}^{2}b_{k_2}^{(b)}\bigg[1\!+\!\sum_{l=1}^{\nu_{\mathcal{B}}}\!\binom{\nu_{\mathcal{B}}}{l}(-1)^{l}\nonumber\\
&\hspace{1cm}\times\exp\!\bigg(\!-\frac{l\eta_{\mathcal{B}}\gamma_{\rm th}\sigma^2_b}{P_ag_{k_1}^{(a,f)}g^{(b)}_{k_2}L_{ab}^{(\mathcal{B})}}\bigg)V(\nu_{\mathcal{B}},l,g^{(a,f)}_{k_1})\bigg],
\end{align}
where $P_{ab}({\rm L})\triangleq P_{\rm LOS}(d_{ab})$ and $P_{ab}({\rm N})\triangleq 1-P_{\rm LOS}(d_{ab})$. Also,
$V(\nu_{\mathcal{B}},l,g^{(a,f)}_{k_1})$, for $\nu_{\mathcal{B}}=1$ and $\nu_{\mathcal{B}}\geq2$, is defined as
\begin{align}
&\hspace{-2.8cm}\!V(1,l,g^{(a,f)}_{k_1})\!\triangleq\!\frac{P_ag^{(a,f)}_{k_1}}{P_J^{\rm max}l\gamma_{\rm th}m_{a,s}}\ln\!\bigg(\!1\!+\!\frac{P_J^{\rm max}l\gamma_{\rm th}m_{a,s}}{P_ag^{(a,f)}_{k_1}}\!\bigg),\!\\
V(\nu_{\mathcal{B}}\geq2,l,g^{(a,f)}_{k_1})&\triangleq\frac{\nu_{\mathcal{B}}P_ag^{(a,f)}_{k_1}}{P_J^{\rm max}l\eta_{\mathcal{B}}\gamma_{\rm th}m_{a,s}(\nu_{\mathcal{B}}-1)}\nonumber\\
&\hspace{-0.7cm}\times\bigg[1-\bigg(1+\frac{P_J^{\rm max}l\eta_{\mathcal{B}}\gamma_{\rm th}m_{a,s}}{\nu_{\mathcal{B}}P_ag^{(a,f)}_{k_1}}\bigg)^{\!\!1-\nu_{\mathcal{B}}}\bigg].
\end{align}
\end{theorem}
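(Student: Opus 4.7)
My plan is to express $P_{\rm out}^{\rm AB}$ as a nested expectation by conditioning on all sources of randomness other than the fading coefficients and the jammer power $P_J$, then apply Alzer's lemma in the same fashion as in the proof of \Tref{col2} to convert the resulting probability into a sum of exponentials that can be evaluated through standard moment generating function techniques. Specifically, the law of total probability lets me condition on the blockage state $\mathcal{B}\in\{{\rm L},{\rm N}\}$ (with probabilities $P_{ab}(\mathcal{B})$) and on the discrete beamsteering gain states $g_{k_1}^{(a,f)}$ and $g_{k_2}^{(b)}$ of Alice's first array and Bob's receiver (with probabilities $b_{k_1}^{(a,f)}$ and $b_{k_2}^{(b)}$). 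This produces the three outer sums in \eqref{pout}. For each conditioning I substitute $G_{ab,f}=g_{k_1}^{(a,f)}g_{k_2}^{(b)}$, $G_{ab,s}=m_{a,s}g_{k_2}^{(b)}$, and $L_{ab}=L_{ab}^{(\mathcal{B})}$ into \eqref{gab}, and rewrite the event $\{\gamma_{ab}<\gamma_{\rm th}\}$ as an upper bound on $|\tilde{h}_{ab,f}|^2$ in terms of $P_J$, $|\tilde{h}_{ab,s}|^2$, and $\sigma_b^2$.

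Next, since $|\tilde{h}_{ab,f}|^2$ is a normalized gamma RV independent of $|\tilde{h}_{ab,s}|^2$ and $P_J$, I can apply Alzer's lemma together with the binomial expansion \eqref{FXx} to its conditional CDF evaluated at that bound. Separating the $l=0$ contribution (which yields the constant $1$ inside the brackets of \eqref{pout}), each remaining term indexed by $l=1,\ldots,\nu_{\mathcal{B}}$ carries the factor $\exp(-l\eta_{\mathcal{B}}\gamma_{\rm th}\sigma_b^2/(P_ag_{k_1}^{(a,f)}g_{k_2}^{(b)}L_{ab}^{(\mathcal{B})}))$ coming from the deterministic noise, multiplied by $\E[\exp(-l\eta_{\mathcal{B}}\gamma_{\rm th}P_JG_{ab,s}|\tilde{h}_{ab,s}|^2/(P_aG_{ab,f}))]$. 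Conditional on $P_J$, the expectation over $|\tilde{h}_{ab,s}|^2$ is the MGF of a normalized gamma evaluated at a negative argument, giving $(1+l\eta_{\mathcal{B}}\gamma_{\rm th}P_Jm_{a,s}/(\nu_{\mathcal{B}}P_ag_{k_1}^{(a,f)}))^{-\nu_{\mathcal{B}}}$. Note that $g_{k_2}^{(b)}$ cancels in the ratio $G_{ab,s}/G_{ab,f}=m_{a,s}/g_{k_1}^{(a,f)}$, which is precisely why the function $V$ in the statement does not depend on $k_2$.

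Finally, I will average this expression over the uniform distribution of $P_J$ on $[0,P_J^{\rm max}]$. The integrand has the form $(1+cP_J)^{-\nu_{\mathcal{B}}}$ with $c=l\eta_{\mathcal{B}}\gamma_{\rm th}m_{a,s}/(\nu_{\mathcal{B}}P_ag_{k_1}^{(a,f)})$, so direct integration yields the $V(\nu_{\mathcal{B}},l,g_{k_1}^{(a,f)})$ stated in the theorem: for $\nu_{\mathcal{B}}=1$ the antiderivative is logarithmic, producing the $\ln(1+\cdot)$ expression, while for $\nu_{\mathcal{B}}\geq 2$ the power-rule antiderivative gives the $1-(1+\cdot)^{1-\nu_{\mathcal{B}}}$ form (the prefactor $\nu_{\mathcal{B}}/(\nu_{\mathcal{B}}-1)$ arising from the reciprocal of $c(\nu_{\mathcal{B}}-1)$). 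The main obstacle will be careful bookkeeping rather than any single hard step: tracking how $G_{ab,f}$ and $G_{ab,s}$ factor into $k_1$-, $k_2$-, and $m_{a,s}$-dependent pieces so that the cancellation of $g_{k_2}^{(b)}$ inside $V$ is transparent, and isolating the $\nu_{\mathcal{B}}=1$ sub-case so that its closed form matches the $\nu_{\mathcal{B}}\geq 2$ pattern obtained via \cite[Eq.~(3.351.3)]{gradshteyn2014table}.
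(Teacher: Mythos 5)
Your proposal is correct and follows essentially the same route as the paper's proof: condition on the blockage state and the discrete gains, reduce the outage event to a CDF evaluation of the normalized gamma RV $|\tilde{h}_{ab,f}|^2$ via Alzer's lemma and the binomial expansion \eqref{FXx}, apply the gamma MGF over $|\tilde{h}_{ab,s}|^2$, and finish with a direct integration over the uniform $P_J$ to obtain $V$. Your observation that $g_{k_2}^{(b)}$ cancels in the ratio $G_{ab,s}/G_{ab,f}$ (so that $V$ depends only on $k_1$) and your separate treatment of the logarithmic antiderivative for $\nu_{\mathcal{B}}=1$ both match the paper's computation exactly.
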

\begin{proof}
Given the SINR of the Alice-Bob link in \eqref{gab}, the outage probability conditioned on the blockage instance $\mathcal{B}$ as well as the antenna gains $g^{(a,f)}$ and $g^{(b)}$ is characterized as follows:
\begin{align}\label{proofthm3-1}
P_{\rm out,C}^{\rm AB}\triangleq&\Pr(\gamma_{ab}<\gamma_{\rm th}|\mathcal{B},g^{(a,f)},g^{(b)})\nonumber\\
\stackrel{(a)}{=}&\Pr\left(|\tilde{h}_{ab,f}^{(\mathcal{B})}|^2<C_2P_J|\tilde{h}_{ab,s}^{(\mathcal{B})}|^2+C_3\right)\nonumber\\
&\hspace{-1.4cm}\stackrel{(b)}{=}\sum_{l=0}^{\nu_{\mathcal{B}}}\binom{\nu_{\mathcal{B}}}{l}(-1)^{l}{\rm e}^{-l\eta_{\mathcal{B}}C_3}\E_{P_J,|\tilde{h}^{(\mathcal{B})}_{ab,s}|^2}\!\!\left[{\rm e}^{-l\eta_{\mathcal{B}}C_2P_J|\tilde{h}^{(\mathcal{B})}_{ab,s}|^2}\!\right]\nonumber\\
&\hspace{-1.4cm}\stackrel{(c)}{=}\sum_{l=0}^{\nu_{\mathcal{B}}}\binom{\nu_{\mathcal{B}}}{l}(-1)^{l}{\rm e}^{-l\eta_{\mathcal{B}}C_3}\E_{P_J}\!\!\left[\!\left(1+\frac{l\eta_{\mathcal{B}}C_2P_J}{\nu_{\mathcal{B}}}\right)^{\!\!-\nu_{\mathcal{B}}}\right]
\nonumber\\
&\hspace{-1.4cm}\stackrel{(d)}{=}\!1\!+\!\sum_{l=1}^{\nu_{\mathcal{B}}}\!\binom{\nu_{\mathcal{B}}}{l}\!\frac{(-1)^{l}{\rm e}^{-l\eta_{\mathcal{B}}C_3}}{P_J^{\rm max}}\!\!\int_{0}^{P_J^{\rm max}}\!\!\!\!\!\!\!\Big(\!1\!+\!\frac{l\eta_{\mathcal{B}}C_2x}{\nu_{\mathcal{B}}}\Big)^{\!\!-\nu_{\mathcal{B}}}\!\!dx,\!\!
\end{align}
where in step $(a)$ we have defined $C_2\triangleq \gamma_{\rm th}m_{a,s}/(P_ag^{(a,f)})$ and $C_3\triangleq \gamma_{\rm th}\sigma^2_b/(P_ag^{(a,f)}g^{(b)}L_{ab}^{(\mathcal{B})})$. Moreover, step $(b)$ follows by Alzer's lemma together with the binomial theorem as \eqref{FXx}, and step $(c)$ is derived using the MGF of the normalized gamma RV $|\tilde{h}^{(\mathcal{B})}_{ab,s}|^2$. Finally,  taking the integral  in step $(d)$ and recalling the definition of  the function $V(\nu_{\mathcal{B}},l,g^{(a,f)}_{k_1})$ from the statement of the theorem complete the proof. 
\end{proof}
\subsection{Maximum Effective Covert Rate}\label{Sec4B}
Given any target data rate $R_b$, Alice and Bob can have the effective communication rate $\overline{R}_{a,b}\triangleq R_b(1-P_{\rm out}^{\rm AB})$, where their outage probability $P_{\rm out}^{\rm AB}$ in achieving the target rate $R_b$ is obtained using \Tref{thm3}. The goal here is to determine the optimal value of $P_J^{\rm max}$ that maximizes $\overline{R}_{a,b}$ while also satisfying the covertness requirement, i.e., $\E[P^*_{e,w}]\geq 1-\epsilon$ for any given $\epsilon>0$. We first note that $\E[P^*_{e,w}]$ and $P_{\rm out}^{\rm AB}$  both monotonically increase with $P_J^{\rm max}$ (see also Fig. \ref{Fig1} and Fig. \ref{Fig2} for the visualization). Then, in order to obtain the maximum effective covert rate $\overline{R}^*_{a,b}$ achievable in our setup, we need to pick the smallest possible value for $P_J^{\rm max}$ given that $\overline{R}_{a,b}$ is monotonically decreasing with respect to $P_J^{\rm max}$. This smallest possible value for $P_J^{\rm max}$, denoted by $P_{J,\rm opt}^{\rm max}$, must also satisfy the covertness requirement $\E[P^*_{e,w}]\geq 1-\epsilon$ for the given $\epsilon>0$. Now, given that $\E[P^*_{e,w}]$ monotonically increases with $P_J^{\rm max}$, the solution of the equation $\E[P^*_{e,w}]=1-\epsilon$ for $P_J^{\rm max}$ defines $P_{J,\rm opt}^{\rm max}$. This observation is summarized in the following proposition. Note, however, that the optimal rate per \Pref{prop4} needs to be evaluated numerically. 

\begin{proposition}\label{prop4}
Given fixed system and channel parameters, fixed covertness requirement $\epsilon$, and target data rate $R_b$, the maximum effective covert rate achievable in the considered setup is equal to $R_b(1-P_{\rm out}^{*\rm AB})$, denoted by $\overline{R}^*_{a,b}$, {where $P_{\rm out}^{*\rm AB}$ is equal to $P_{\rm out}^{\rm AB}$, specified in \eqref{pout}, evaluated in $P_{J,\rm opt}^{\rm max}$ that is the solution of the equation $\E[P^*_{e,w}]=1-\epsilon$ for $P_J^{\rm max}$.}
\end{proposition}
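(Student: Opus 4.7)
The plan is to reduce \Pref{prop4} to a one-dimensional constrained optimization in $P_J^{\rm max}$ and then invoke two monotonicity facts. Since $R_b$ is fixed, maximizing $\overline{R}_{a,b}(P_J^{\rm max}) = R_b\bigl(1 - P_{\rm out}^{\rm AB}(P_J^{\rm max})\bigr)$ subject to the covertness constraint $\E[P^*_{e,w}](P_J^{\rm max}) \geq 1-\epsilon$ is equivalent to choosing $P_J^{\rm max}$ to minimize $P_{\rm out}^{\rm AB}$ within the feasible set. If I can establish (i) $P_{\rm out}^{\rm AB}$ is nondecreasing in $P_J^{\rm max}$ and (ii) $\E[P^*_{e,w}]$ is nondecreasing in $P_J^{\rm max}$, then by (ii) the feasible set is an interval $[P_{J,\mathrm{opt}}^{\rm max}, \infty)$, and by (i) the minimizer over this interval is its left endpoint.

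For the monotonicity of $P_{\rm out}^{\rm AB}$, I would argue directly at the level of the SINR expression \eqref{gab} rather than the closed form \eqref{pout}: for any realization of the fading coefficients, blockage, and antenna gains, increasing $P_J^{\rm max}$ stochastically dominates $P_J \sim \text{Unif}[0, P_J^{\rm max}]$ upward. This weakly increases the interference term $P_J G_{ab,s} L_{ab}|\tilde{h}_{ab,s}|^2$, so the event $\{\gamma_{ab} < \gamma_{\rm th}\}$ is weakly enlarged, and its probability (averaged over all sources of randomness) is nondecreasing. An entirely analogous argument handles $\E[P^*_{e,w}]$: from \Tref{Thm1}, $P^*_{e,w}$ is zero when $\lambda_1 < \lambda_2$ and equals $1 - P_a G_{aw,f}|\tilde{h}_{aw,f}|^2/(P_J^{\rm max} G_{aw,s}|\tilde{h}_{aw,s}|^2)$ otherwise. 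Both the probability of being in the second regime (which requires $\lambda_1 \geq \lambda_2$, hence a sufficiently large $P_J^{\rm max}$ relative to the channel realization) and the value of the expression in that regime are nondecreasing in $P_J^{\rm max}$. Taking expectation over blockage, beamsteering gains, and fading preserves this monotonicity.

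To conclude, I would verify that $g(P_J^{\rm max}) \triangleq \E[P^*_{e,w}]$ is continuous on $[0, \infty)$ with $g(0) = 0$ (Willie faces no jamming uncertainty and detects perfectly) and $g(P_J^{\rm max}) \to 1$ as $P_J^{\rm max} \to \infty$ (the zero-error regime becomes unreachable and the ratio in \eqref{pe} vanishes). Continuity is immediate from \eqref{Epew}, where every summand depends continuously on $P_J^{\rm max}$ through $S(\nu_{\mathcal{B}}, g_k^{(a,s)})$ and $I(\nu_{\mathcal{B}}, l, g_k^{(a,s)})$. Combining continuity with the established monotonicity, for any $\epsilon \in (0,1)$ the equation $g(P_J^{\rm max}) = 1 - \epsilon$ admits a solution, and any such $P_{J,\mathrm{opt}}^{\rm max}$ is the infimum of the covertness-feasible set. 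Substituting this value into \eqref{pout} yields $P_{\rm out}^{*\rm AB}$, and the optimal effective covert rate is $\overline{R}^*_{a,b} = R_b(1 - P_{\rm out}^{*\rm AB})$, as claimed.

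The hard part will be making the two monotonicity claims airtight; both closed forms \eqref{Epew} and \eqref{pout} involve alternating-sign binomial sums, so a term-by-term differentiation with respect to $P_J^{\rm max}$ is opaque. This is precisely why I would bypass the closed forms and argue monotonicity at the pre-integration level via stochastic dominance of the uniform jamming power and the piecewise-linear structure in \Tref{Thm1}; once that is done, the remainder of the argument is a routine application of the intermediate value theorem and the observation that, on the feasible set, a monotonically increasing objective (namely $-P_{\rm out}^{\rm AB}$) is maximized at the boundary.
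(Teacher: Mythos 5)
Your proposal is correct and follows essentially the same route as the paper: the paper justifies \Pref{prop4} in the paragraph preceding its statement by noting that both $\E[P^*_{e,w}]$ and $P_{\rm out}^{\rm AB}$ increase monotonically with $P_J^{\rm max}$, so that $\overline{R}_{a,b}$ is maximized at the smallest covertness-feasible $P_J^{\rm max}$, namely the solution of $\E[P^*_{e,w}]=1-\epsilon$. The only difference is that the paper asserts the two monotonicity facts by pointing to Fig.~\ref{Fig1} and Fig.~\ref{Fig2}, whereas you supply an actual argument for them (first-order stochastic dominance of ${\rm Unif}[0,P_J^{\rm max}]$ applied at the pre-integration level, together with continuity and the intermediate value theorem for the existence of $P_{J,\rm opt}^{\rm max}$), which strengthens rather than changes the paper's reasoning.
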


\subsection{Ergodic Capacity}\label{Sec4C}
In addition to characterizing the maximum effective covert rate $\overline{R}^*_{a,b}$ given a target rate $R_b$, provided in Section \ref{Sec4B}, it is desirable to determine the achievable average data rate of the Alice-Bob link, referred to as its \textit{ergodic capacity}, given fixed values for the parameters involved in the model\footnote{We assume that the set of parameters is chosen such that the covert communication requirement $\E[P^*_{e,w}]\geq 1-\epsilon$ is satisfied for any $\epsilon>0$. Note that $\E[P^*_{e,w}]$ depends only on the values of the design parameters as well as the statistics of the RVs involved and not on their instantaneous realizations.}. The ergodic capacity $\E[R_{a,b}]$ of the Alice-Bob link is obtained while assuming that the threshold/target data rate $R_b$ is adjusted by the channel conditions, i.e., $\gamma_{\rm th}=\gamma_{ab}$, implying that Bob's decoder can always decode the received signal without outage. In fact, given the instantaneous SINR $\gamma_{ab}$, specified in \eqref{gab}, Alice can reliably transmit to Bob with the data rate equal to $\log_2(1+\gamma_{ab})$. Therefore, on average, the data rate $\E[R_{a,b}]\triangleq \E[\log_2(1+\gamma_{ab})]$ is achievable for the Alice-Bob link, where the expectation is over the RVs involved in \eqref{gab}. In the following theorem, we characterize $\E[R_{a,b}]$ in a tractable form that involves only one-dimensional integrals over one of the fading coefficients. 
\begin{theorem}\label{Thm5}
The ergodic capacity $\E[R_{a,b}]$ of the Alice-Bob link is given by
\begin{align}\label{eq4C1}
\!\E[R_{a,b}]=&\frac{P_a}{m_{a,s}P_J^{\rm max}\ln 2}\sum_{\mathcal{B}\in\{{\rm L,N}\}}\!\!\!P_{ab}(\mathcal{B})\sum_{k_1=1}^{2}\!b_{k_1}^{(a,f)}g_{k_1}^{(a,f)}\nonumber\\
&\times\sum_{k_2=1}^{2}\!b_{k_2}^{(b)}\sum_{l=1}^{\nu_{\mathcal{B}}}\binom{\nu_{\mathcal{B}}}{l}\frac{(-1)^{l}}{l\eta_{\mathcal{B}}}\left[\mathcal{J}_1-\mathcal{J}_2-\mathcal{J}_3\right],
\end{align}
where $\mathcal{J}_1$, $\mathcal{J}_2$, and $\mathcal{J}_3$ are defined in the form of one-dimensional integrals as follows:
\begin{align}
\mathcal{J}_1&\triangleq\int_{0}^{\infty}\frac{1}{y}\Bigg[{\rm eEi}\Bigg(\frac{l\eta_{\mathcal{B}}}{P_ag_{k_1}^{(a,f)}}\Bigg[m_{a,s}P_J^{\rm max}y\nonumber\\
&\hspace{3.5cm}+\frac{\sigma^2_b}{g_{k_2}^{(b)}L_{ab}^{(\mathcal{B})}}\Bigg]\Bigg)\Bigg]f_Y(y)dy,\label{J1}\\
\mathcal{J}_2&\triangleq\left[{\rm eEi}\!\left(\frac{l\eta_{\mathcal{B}}\sigma^2_b}{P_ag_{k_1}^{(a,f)}g_{k_2}^{(b)}L_{ab}^{(\mathcal{B})}}\right)\right]\int_{0}^{\infty}\frac{1}{y}f_Y(y)dy,\label{J2}\\
\mathcal{J}_3&\triangleq\int_{0}^{\infty}\frac{1}{y}\!\left[\ln\!\left(\!1\!+\!\frac{m_{a,s}g_{k_2}^{(b)}L_{ab}^{(\mathcal{B})}P_J^{\rm max}}{\sigma^2_b}y\right)\right]\!f_Y(y)dy\label{J3},
\end{align}
where ${\rm eEi}(x)\triangleq {\rm e}^x{\rm Ei}(-x)$, 
and  $f_Y(y)$ is the PDF of a normalized gamma RV as in \eqref{pdf_gamma}.
\end{theorem}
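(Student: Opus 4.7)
The plan is to compute $\E[\log_2(1+\gamma_{ab})]$ by conditioning on everything except the three intrinsic random quantities in $\gamma_{ab}$, namely $X \triangleq |\tilde{h}_{ab,f}|^2$, $Y \triangleq |\tilde{h}_{ab,s}|^2$, and $P_J$, and then carrying out the three expectations in the order $X \to P_J \to Y$, and finally averaging over the blockage instance $\mathcal{B}$ and the beamsteering-error gains $g^{(a,f)}$, $g^{(b)}$ exactly as in Theorem~\ref{thm3}. Conditioning on $Y=y$ and $P_J$ reduces the SINR to $\gamma_{ab}=\alpha X / D$, where $\alpha \triangleq P_a g^{(a,f)} g^{(b)} L_{ab}^{(\mathcal{B})}$ and $D \triangleq P_J m_{a,s} g^{(b)} L_{ab}^{(\mathcal{B})} y + \sigma_b^2$.

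First, for the $X$-expectation I would use Alzer's approximation (as in the proofs of Theorems~\ref{col2} and \ref{thm3}), which yields $\bar F_X(x) = -\sum_{l=1}^{\nu_{\mathcal{B}}}\binom{\nu_{\mathcal{B}}}{l}(-1)^l e^{-l\eta_{\mathcal{B}} x}$. Combining this with the tail formula $\E[\ln(1+cX)] = \int_0^\infty \frac{c\,\bar F_X(x)}{1+cx}dx$, setting $c=\alpha/D$ and performing the substitution $u=D+\alpha x$ reduces each term to the standard exponential integral
\begin{equation*}
\int_0^\infty \frac{\alpha\, e^{-l\eta_{\mathcal{B}} x}}{D+\alpha x}\,dx = -{\rm eEi}\!\left(\frac{l\eta_{\mathcal{B}} D}{\alpha}\right),
\end{equation*}
so that $\E_X[\ln(1+\alpha X/D)] = \sum_{l=1}^{\nu_{\mathcal{B}}} \binom{\nu_{\mathcal{B}}}{l}(-1)^l\,{\rm eEi}(l\eta_{\mathcal{B}} D/\alpha)$.

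Next, for the $P_J$-average (uniform on $[0,P_J^{\max}]$), note that $l\eta_{\mathcal{B}} D/\alpha$ is affine in $P_J$. The key identity I will use is the antiderivative
\begin{equation*}
\int {\rm eEi}(w)\,dw = {\rm eEi}(w) - \ln w + C,
\end{equation*}
which follows from one integration by parts together with $\frac{d}{dw}{\rm Ei}(-w) = e^{-w}/w$. Changing variables to $w=l\eta_{\mathcal{B}}(P_J\,m_{a,s}g^{(b)}L_{ab}^{(\mathcal{B})}y + \sigma_b^2)/\alpha$ produces a prefactor $\alpha/(l\eta_{\mathcal{B}}\,m_{a,s}g^{(b)}L_{ab}^{(\mathcal{B})}\,y\,P_J^{\max})$ and three boundary terms: ${\rm eEi}$ evaluated at the upper limit (this becomes the integrand of $\mathcal{J}_1$), ${\rm eEi}$ evaluated at the lower limit (which is $y$-independent and pulls out, giving $\mathcal{J}_2$), and $-\ln(w_{\max}/w_{\min})=-\ln\!\bigl(1+m_{a,s}g^{(b)}L_{ab}^{(\mathcal{B})}P_J^{\max}y/\sigma_b^2\bigr)$ (which yields $\mathcal{J}_3$).

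Finally, I would take the outer expectation over $Y$ with PDF $f_Y$, which simply wraps the three surviving terms into the one-dimensional integrals $\mathcal{J}_1,\mathcal{J}_2,\mathcal{J}_3$; the constant prefactor simplifies to $P_a g^{(a,f)}/(l\eta_{\mathcal{B}} m_{a,s} P_J^{\max})$ because $g^{(b)}L_{ab}^{(\mathcal{B})}$ cancels between $\alpha$ and the $P_J$-change-of-variables Jacobian. Converting $\ln$ to $\log_2$ and averaging over $\mathcal{B}$, $g^{(a,f)}\in\{g^{(a,f)}_{k_1}\}$, $g^{(b)}\in\{g^{(b)}_{k_2}\}$ with the appropriate probabilities yields \eqref{eq4C1}. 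The only genuinely nontrivial step is obtaining the antiderivative of ${\rm eEi}(\cdot)$ and performing the $P_J$-change of variables cleanly; the remaining integrals over $y$ do not admit closed forms for general $\nu_{\mathcal{B}}$, which is why the final expression is naturally stated as a sum of one-dimensional integrals rather than reduced further.
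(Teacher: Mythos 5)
Your proposal is correct and follows essentially the same route as the paper: condition on the blockage state and the discrete antenna gains, evaluate the $X$-expectation via the tail-integral identity combined with Alzer's approximation and the binomial expansion to obtain an ${\rm eEi}$ expression, then average over the uniform $P_J$ using the antiderivative $\int {\rm eEi}(w)\,dw = {\rm eEi}(w)-\ln w$ (which is precisely the content of \Lref{lemma_app1} in Appendix~\ref{App_A}), and finally leave the $Y$-average as the one-dimensional integrals $\mathcal{J}_1,\mathcal{J}_2,\mathcal{J}_3$. The order of integration, the key identities, and the bookkeeping of the prefactor all match the paper's argument.
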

\begin{proof}
Based on the system model considered in this paper and our earlier discussions, we have
\begin{align}\label{eqthm5p1}
\!\E[R_{a,b}]\!=\!\!\!\!\!\!\sum_{\mathcal{B}\in\{{\rm L,N}\}}\!\!\!\!\!\!P_{ab}(\mathcal{B})\!\!\sum_{k_1=1}^{2}\!b_{k_1}^{(a,f)}\!\!\sum_{k_2=1}^{2}\!b_{k_2}^{(b)}\E[R_{a,b}|\mathcal{B},g^{(a,f)}\!,g^{(b)}],
\end{align}
where $\E[R_{a,b}|\mathcal{B},g^{(a,f)}\!,g^{(b)}]$ is the ergodic capacity conditioned on the blockage instance $\mathcal{B}$, and the antenna gains $g^{(a,f)}$ and $g^{(b)}$. Given  the definition of the ergodic capacity $\E[R_{a,b}]\triangleq \E[\log_2(1+\gamma_{ab})]$ and the expression of the SINR $\gamma_{ab}$ in \eqref{gab}, we have
\begin{align}\label{eqthm5p2}
\!\E[R_{a,b}|\mathcal{B},g^{(a,f)}\!,g^{(b)}]\!=\!\E_{Y,P_J}\!\!\Bigg[{\int_{0}^{\infty}\log_2(1+C'_1x)f_X(x)dx}\Bigg],
\end{align}
where $X\triangleq |\tilde{h}_{ab,f}^{(\mathcal{B})}|$ and $Y\triangleq |\tilde{h}_{ab,s}^{(\mathcal{B})}|$ represent the RVs associated with the involved fading coefficients with PDFs $f_X(x)$ and $f_Y(y)$, respectively. Moreover, $C'_1\triangleq 1/(C'_2P_JY+C'_3)$ with $C'_2\triangleq m_{a,s}/(P_ag^{(a,f)})$ and $C'_3\triangleq \sigma^2_b/(P_ag^{(a,f)}g^{(b)}L_{ab}^{(\mathcal{B})})$. Observe that for a given RV $Z$ with the PDF $f_Z(z)$ and CDF $F_Z(z)$ we have the following part-by-part integration equality
\begin{align}\label{eqthm5p3}
&\int_{a}^{b}\log_2(1+cz)f_Z(z)dz=\frac{1}{\ln 2}\Bigg[c\int_{a}^{b}\frac{1-F_Z(z)}{1+cz}dz\nonumber\\
&+(1-F_Z(a))\ln(1+ca)-(1-F_Z(b))\ln(1+cb)\Bigg],
\end{align}
with $c$ being a constant. Then the integral involved in \eqref{eqthm5p2} is computed as follows:
\begin{align}\label{eqthm5p4}
&\int_{0}^{\infty}\log_2(1+C'_1x)f_X(x)dx\stackrel{(a)}{=}\frac{C'_1}{\ln 2}\int_{0}^{\infty}\frac{1-F_X(x)}{1+C'_1x}dx
\nonumber\\
&\hspace{1cm}\stackrel{(b)}{=}\frac{1}{\ln 2}\sum_{l=1}^{\nu_{\mathcal{B}}}\binom{\nu_{\mathcal{B}}}{l}(-1)^{l}{\rm e}^{l\eta_{\mathcal{B}}/C'_1}{\rm Ei}({-l\eta_{\mathcal{B}}/C'_1}),
\end{align}
where step $(a)$ is by using \eqref{eqthm5p3}, and noting that $(1-F_X(0))\ln(1+C'_1\times0)=0
$ and
\begin{align}
\lim_{x\to\infty}(1-F_X(x))\ln(1+C'_1x)=0,
\end{align}
since $1-F_X(x)=-\sum_{l=1}^{\nu_{\mathcal{B}}}\binom{\nu_{\mathcal{B}}}{l}(-1)^{l}{\rm e}^{-l\eta_{\mathcal{B}}x}$ decays exponentially while $\ln(1+C'_1x)$ grows logarithmically with $x$. Moreover, step $(b)$ is derived by first using Alzer's lemma together with the binomial theorem as \eqref{FXx},
and then applying \cite[Eq. (3.352.4)]{gradshteyn2014table}. Now by substituting \eqref{eqthm5p4} into \eqref{eqthm5p2}, we have for the conditional ergodic capacity as
\begin{align}\label{eqthm5p5}
&\E[R_{a,b}|\mathcal{B},g^{(a,f)}\!,g^{(b)}]=\frac{1}{P_J^{\rm max}\ln 2}\sum_{l=1}^{\nu_{\mathcal{B}}}\binom{\nu_{\mathcal{B}}}{l}(-1)^{l}\E_{Y}\!\!\Bigg[\nonumber\\
&\underbrace{{\int_{0}^{P_J^{\rm max}}\!\!\!\!\!\!\exp\!\left(l\eta_{\mathcal{B}}(C'_2Yt\!+\!C'_3)\right){\rm Ei}\!\left(-l\eta_{\mathcal{B}}(C'_2Yt\!+\!C'_3)\right)dt}}_{I_1}\Bigg]\!.\!
\end{align}
The integral term $I_1$ can be computed in a closed form as
\begin{align}\label{eqthm5p6}
I_1\stackrel{(a)}{=}&\frac{1}{C'_2Y}\int_{C'_3}^{C'_2YP_J^{\rm max}+C'_3}\exp\!\left(l\eta_{\mathcal{B}}z\right){\rm Ei}\!\left(-l\eta_{\mathcal{B}}z\right)dz\nonumber\\
\stackrel{(b)}{=}&\frac{1}{l\eta_{\mathcal{B}}C'_2Y}\bigg[{\rm eEi}\!\left(l\eta_{\mathcal{B}}(C'_2YP_J^{\rm max}+C'_3)\right)\nonumber\\
&-{\rm eEi}\!\left(l\eta_{\mathcal{B}}C'_3\right)-\ln\!\left(1+C'_2YP_J^{\rm max}/C'_3\right)\bigg],
\end{align}
where step $(a)$ is by defining $z\triangleq C'_2Yt\!+\!C'_3$, and step $(b)$ is obtained using \Lref{lemma_app1} in Appendix \ref{App_A} and defining ${\rm eEi}(x)\triangleq {\rm e}^x{\rm Ei}(-x)$. Finally, taking the expectation of $I_1$ in \eqref{eqthm5p6} over $Y$ and then plugging \eqref{eqthm5p5} back into \eqref{eqthm5p1} complete the proof.
\end{proof}

To the best of our knowledge, the one-dimensional integrals in \eqref{J1}-\eqref{J3} cannot be computed in closed forms for all values of $\nu_{\mathcal{B}}$. In particular, obtaining closed-form expressions for the special case  $\nu_{\mathcal{B}}=1$, which corresponds to Rayleigh fading channels, is not straightforward mainly due to the fractional term $1/y$ in the integrands. On the other hand, as delineated in the following proposition, deriving the closed forms of \eqref{J1}-\eqref{J3} for the special case $\nu_{\mathcal{B}}=2$ is straightforward.
\begin{proposition}\label{Prop6}
For $\nu_{\mathcal{B}}=2$, the closed-form expressions for $\mathcal{J}_1$, $\mathcal{J}_2$, and $\mathcal{J}_3$, defined in \Tref{Thm5}, are as follows:
\begin{align}
\mathcal{J}_1&=\frac{4P_ag_{k_1}^{(a,f)}}{l\eta_{\mathcal{B}}m_{a,s}P_J^{\rm max}\!-\!2P_ag_{k_1}^{(a,f)}}\Bigg[{\rm eEi}\left(\!\frac{2\sigma^2_b}{g_{k_2}^{(b)}L_{ab}^{(\mathcal{B})}m_{a,s}P_J^{\rm max}}\!\right)\nonumber\\
&\hspace{2cm}-{\rm eEi}\left(\frac{l\eta_{\mathcal{B}}\sigma^2_b}{P_ag_{k_1}^{(a,f)}g_{k_2}^{(b)}L_{ab}^{(\mathcal{B})}}\right)\Bigg],\\
\mathcal{J}_2&=2~\!{\rm eEi}\!\left(\frac{l\eta_{\mathcal{B}}\sigma^2_b}{P_ag_{k_1}^{(a,f)}g_{k_2}^{(b)}L_{ab}^{(\mathcal{B})}}\right),\label{J2prop}\\
\mathcal{J}_3&=-2~\!{\rm eEi}\!\left(\frac{2\sigma^2_b}{m_{a,s}g_{k_2}^{(b)}L_{ab}^{(\mathcal{B})}P_J^{\rm max}}\right).
\end{align}
\end{proposition}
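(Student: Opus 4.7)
The plan is to specialize the three one-dimensional integrals of Theorem~\ref{Thm5} to $\nu_{\mathcal{B}}=2$, where the gamma density reduces to $f_Y(y)=4y\,{\rm e}^{-2y}$ and therefore $f_Y(y)/y = 4\,{\rm e}^{-2y}$ — a bounded, integrable exponential kernel. This cancellation is what makes $\nu_{\mathcal{B}}=2$ tractable; for $\nu_{\mathcal{B}}=1$ one is left with ${\rm e}^{-y}/y$, whose non-integrable singularity at the origin obstructs the same approach.

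For $\mathcal{J}_2$, the factor $\int_{0}^{\infty}(1/y)f_Y(y)\,dy$ is simply $2$, yielding the closed form stated in \eqref{J2prop} by pulling the ${\rm eEi}(\cdot)$ constant out front. For $\mathcal{J}_3$, the problem reduces to $4\int_{0}^{\infty}{\rm e}^{-2y}\ln(1+\beta y)\,dy$ with $\beta = m_{a,s}g_{k_2}^{(b)}L_{ab}^{(\mathcal{B})}P_J^{\rm max}/\sigma_b^2$; applying the tabulated identity $\int_{0}^{\infty}{\rm e}^{-\mu y}\ln(1+\beta y)\,dy = -\mu^{-1}\,{\rm e}^{\mu/\beta}\,{\rm Ei}(-\mu/\beta)$ with $\mu=2$ and unpacking ${\rm eEi}(x)={\rm e}^{x}{\rm Ei}(-x)$ delivers the stated $-2\,{\rm eEi}\!\left(2\sigma_b^2/(m_{a,s}g_{k_2}^{(b)}L_{ab}^{(\mathcal{B})}P_J^{\rm max})\right)$.

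The main effort is $\mathcal{J}_1$. Setting $a\triangleq l\eta_{\mathcal{B}}\sigma_b^2/(P_ag_{k_1}^{(a,f)}g_{k_2}^{(b)}L_{ab}^{(\mathcal{B})})$ and $\kappa\triangleq l\eta_{\mathcal{B}}m_{a,s}P_J^{\rm max}/(P_ag_{k_1}^{(a,f)})$, we have $\mathcal{J}_1 = 4\int_{0}^{\infty}{\rm e}^{-2y}\,{\rm eEi}(a+\kappa y)\,dy$. The substitution $z=a+\kappa y$ transforms this into $(4/\kappa)\,{\rm e}^{2a/\kappa}\int_{a}^{\infty}{\rm e}^{(1-2/\kappa)z}{\rm Ei}(-z)\,dz$, which is exactly the form handled by \Lref{lemma_app1} used in the proof of Theorem~\ref{Thm5}. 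Invoking that lemma (equivalently, integrating by parts with $u={\rm Ei}(-z)$, $dv={\rm e}^{(1-2/\kappa)z}dz$) produces two ${\rm eEi}$ evaluations: one at $z=a$ giving ${\rm eEi}\!\left(l\eta_{\mathcal{B}}\sigma_b^2/(P_ag_{k_1}^{(a,f)}g_{k_2}^{(b)}L_{ab}^{(\mathcal{B})})\right)$, and one arising from the $\ln$--${\rm Ei}$ part of the antiderivative, which after simplification carries the argument $2a/\kappa = 2\sigma_b^2/(m_{a,s}g_{k_2}^{(b)}L_{ab}^{(\mathcal{B})}P_J^{\rm max})$ (note the $l\eta_{\mathcal{B}}$ cancels). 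The common prefactor $1/(1-2/\kappa)=\kappa/(\kappa-2)$ collapses to $4P_ag_{k_1}^{(a,f)}/(l\eta_{\mathcal{B}}m_{a,s}P_J^{\rm max}-2P_ag_{k_1}^{(a,f)})$, reproducing the stated expression.

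The hard part is purely bookkeeping rather than conceptual: one must verify convergence at $z\to\infty$ (which holds because ${\rm Ei}(-z)\sim-{\rm e}^{-z}/z$ makes the integrand of order ${\rm e}^{-2z/\kappa}/z$ for any $\kappa>0$), correctly assign which of the two emerging ${\rm eEi}$ arguments corresponds to the lower endpoint $z=a$ versus the term generated by the lemma, and track the cancellation of $l\eta_{\mathcal{B}}$ inside $2a/\kappa$ so that it exactly matches the coefficient pattern in the statement of Proposition~\ref{Prop6}. No analytical ingredient beyond what Theorem~\ref{Thm5} already invoked is needed.
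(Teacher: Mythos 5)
Your proposal is correct and follows essentially the same route as the paper: substitute $f_Y(y)=4y\,{\rm e}^{-2y}$ so that $f_Y(y)/y=4{\rm e}^{-2y}$, evaluate $\mathcal{J}_2$ trivially, handle $\mathcal{J}_3$ via the tabulated identity \cite[Eq. (4.337.2)]{gradshteyn2014table}, and reduce $\mathcal{J}_1$ to an $\int {\rm e}^{bz}{\rm Ei}(-z)\,dz$ integral (the paper simply cites \cite[Corollary 1]{jamali2019uplink} for this last step, which you re-derive explicitly). One small citation slip: \Lref{lemma_app1} covers only the special case $b=-a$, whereas your integral has exponent $1-2/\kappa\neq 1$, so the identity actually needed is the general one in \eqref{app1_2} from \cite[Lemma 1]{jamali2019uplink} --- but since you also supply the direct integration-by-parts argument, this does not affect correctness.
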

\begin{proof} The proof follows by substituting $f_Y(y)=4y{\rm e}^{-2y}$. Then the closed forms for $\mathcal{J}_1$ and $\mathcal{J}_3$ are derived by applying \cite[Corollary 1]{jamali2019uplink} and \cite[Eq. (4.337.2)]{gradshteyn2014table}, respectively.
\end{proof}
It is worth mentioning at the end that rather complicated closed forms can also be obtained for the case of $\nu_{\mathcal{B}}>2$ by employing, e.g., \cite[Eq. (06.35.21.0016.01)]{wolfram}, \cite[Eq. (3.351.3)]{gradshteyn2014table}, and \cite[Eq. (4.358.1)]{gradshteyn2014table} to solve the integrals involved in \eqref{J1}, \eqref{J2}, and \eqref{J3}, respectively.

\section{Practical Scenarios, Discussions, and Future Directions}\label{Sec5}
In this section, we first describe the localization issue in covert mmWave communications and propose a potential design approach that can be incorporated in the context of the system model in this paper. We then establish how the performance metrics of the proposed scheme can be characterized using the earlier results in this paper. Finally, we highlight several interesting future research directions. 
\subsection{Localization Issue}\label{Sec5A}
One of the important aspects in the design of a typical mmWave communication network is the localization of the nodes.
This is mainly due to the highly directive beams used in mmWave communication systems.
In the context of the system model in this paper, while it is important in the design of the system to know both Bob's and Willie's locations, obtaining the information about Willie's location is ought to be much more challenging. In fact, the legitimate parties Alice and Bob can apply sophisticated beam training approaches to establish a directional link. However, since Willie is a passive node, it is more difficult for Alice to obtain precise information about Willie's location. 

Speaking of Willie's location, both Alice's distance to Willie and the spatial direction between them are important to set up the covert mmWave communication system. However, the distance between Alice and Willie is less challenging if the spatial direction between them is known or if the direction issue is properly addressed in the system design. In fact, all of our earlier derivations are in terms of the link length $d_{aw}$ between Alice and Willie. Therefore, the performance metrics change with respect to the distance between Alice and Willie, and one needs to adopt a new set of values for the involved parameters to ensure the covertness requirement while, in the mean time, maximizing the effective rate between Alice and Bob. Note that the uncertainty about Alice's distance to Willie also exists in conventional RF-based covert communication systems incorporating omni-directional antennas and is not particular to the case of covert mmWave communication. 

On the other hand, the uncertainty about the spatial direction between Alice and Willie is more challenging as it directly impacts the design architecture for Alice's transmitter. Throughout the paper we assumed that Willie's direction is known to Alice such that the main lobe of Alice's second antenna array, carrying the jamming signal, is pointed toward Willie. However, in the case of uncertainty about Willie's direction we might not be able to do that; as a result, the jamming signal may arrive to Willie with a much lower gain of the side lobe instead of the main lobe of the second array. This deteriorates the system performance by improving Willie's detection performance which in turn degrades the Alice-Bob link performance by, e.g., requiring Alice to employ lower signal powers $P_a$ or larger jammer powers $P_J^{\rm max}$ to satisfy the covertness requirement. 

One immediate solution to address the aforementioned issue on the uncertainty about Willie's direction is to employ an antenna array with a wide (main lobe) beamwidth to transmit the jamming signal. However, it is very difficult to cover the whole space (except the Alice-Bob direction) using a single wide main lobe \cite{zhang20165g}. Therefore, Alice may prefer to employ several wide-beam antenna arrays to transmit the jamming signal. In this case, the main lobe of each array covers a certain spatial direction such that the union of the main lobes covers the whole space except the Alice-Bob direction. As a result, from the design perspective, we no longer need to know Willie's direction as the whole space is covered by multiple antenna arrays leaving a null space (or negligible side lobes) toward Bob's direction. However, from the analysis point of view, it is not easy to derive tractable forms for the system performance metrics as discussed next. In fact, assuming the sectored-pattern antenna model, as in Section \ref{Sec2A}, each antenna array has a main lobe and a side lobe. Therefore, the jammer signal arrives at Willie by a main lobe from the antenna array covering the Alice-Willie direction in addition to several side lobes each from all other antenna arrays. Similarly, the jammer signal arrives at Bob by the side lobe of all arrays other than the first antenna array. Given the spatial distance between the antenna arrays, the channel between each side lobe and the receiver, either Willie or Bob, has to be assigned an independent fading coefficient. Therefore, the received signals by Willie and Bob involve several independent Nakagami fading coefficients making it difficult to derive tractable forms for the performance metrics. 
In the next subsection, we elucidate how the results in the paper can be applied to approximate the performance metrics with respect to this multi-array system model that resolves the issue of uncertainty about Alice-Willie direction.

\noindent\textbf{Remark 3.} The system model considered in this paper and the subsequent analytical results are directly applicable to the case where an external jammer node with a single or multiple antenna arrays is used to transmit the jamming signal instead of (an) extra antenna array(s) in Alice.

\subsection{Approximate Performance of the Multi-Array Transmitter}\label{Sec5B}
Consider a system model same as the one described in Section \ref{Sec2B} except that Alice is equipped with $N_J$ wide-beam arrays, instead of one, each carrying the same jamming signal and together covering the whole space except the Alice-Bob direction. 
The main lobe gain of the antenna arrays carrying the jamming signals is assumed to be the same, denoted by $M_{a,s}$.
Now, we make the following two assumptions to approximate the system performance using tractable forms.

\textit{1- Zero side lobe gains from jamming arrays to Willie:} Note that, regardless of Willie's location, the main portion of jamming signal reaches Willie by a main lob gain $M_{a,s}$ from one of $N_J$ arrays, denoted by the $j_1$-th array, that is covering Willie's spatial direction. Then the received jamming signal by Willie at the $i$-th channel use is expressed as
\begin{align}\label{ywj}
\mathbf{y}_{w,J}(i)=&\sqrt{P_JL_{aw}g^{(w)}}~\mathbf{x}_J(i)\Bigg[\tilde{h}_{aw,j_1}\sqrt{M_{a,s}}\nonumber\\
&\hspace{2.8cm}+\sum_{\substack{
	j'=1\\
	j'\neq j_1}}^{N_J}\tilde{h}_{aw,j'}\sqrt{m_{a,j'}}\Bigg],
\end{align}
 where $g^{(w)}$ is Willie's beamforming gain, $m_{a,j'}$ is the side lobe gain of the $j'$-th array, and $\tilde{h}_{aw,j'}$ is the fading coefficient from Alice's $j'$-th jamming array to Willie. Assuming that the main lobe gain is much larger that the side lobe gains,
   we can expect the summation term inside the bracket to have negligible contribution compared to the term $\tilde{h}_{aw,j_1}\sqrt{M_{a,s}}$ for \textit{typical} realizations of channel fading coefficients. Note that $N_J$ is relatively small given the wide beamwidths used. Additionally, the fading coefficients $\tilde{h}_{aw,j'}$'s have different phases and hence, the summation term does not blow up with $N_J$. Therefore, we can approximate $\mathbf{y}_{w,J}(i)$ as
 \begin{align}\label{ywj_approx}
 \mathbf{y}_{w,J}(i)\approx\sqrt{P_JL_{aw}M_{a,s}g^{(w)}}~\tilde{h}_{aw,j_1}\mathbf{x}_J(i).
 \end{align}

\textit{2- A single side lobe from jamming arrays to Bob:} Same as in \eqref{ywj}, the received jamming signal by Bob at the $i$-th channel use is expressed as
\begin{align}\label{ybj}
	\mathbf{y}_{b,J}(i)&=\sqrt{P_JL_{ab}g^{(b)}}~\mathbf{x}_J(i)\sum_{j''=1}^{N_J}\tilde{h}_{ab,j''}\sqrt{m_{a,j''}}\nonumber\\
	&\stackrel{(a)}{\approx}\sqrt{P_JL_{ab}m_{a,j_2}g^{(b)}}~\tilde{h}_{ab,j_2}\mathbf{x}_J(i),
\end{align}
where $\tilde{h}_{ab,j''}$ is the fading coefficient from Alice's $j''$-th jamming array to Bob. Moreover, step $(a)$ in \eqref{ybj} is obtained by considering the largest side lobe gain, denoted by $j_2$-th array, as the dominant term of the summation.

Now, given the above two assumptions, it is easy to observe that the performance of the new multi-array system model can be approximated according to our earlier results on the dual-array model. The only difference is that we no longer need to take the average over the gain of the jamming array to Willie to compute $\E[P^*_{e,w}]$ since that gain is deterministically equal to $M_{a,s}$ given the multi-array architecture. All the other derivations remain the same.

\subsection{Future Directions}\label{Sec5C}
{

Given the superiorities of covert communication over the mmWave bands compared to that of RF systems, and that not much work has been done in this area, significant effort is needed to fill the gap on various aspects of covert mmWave communication. In the following, we discuss some possibilities for future research in this direction.
\subsubsection{Uncertainty about Willie's Location}\label{Sec5C1}
In Section \ref{Sec5A}, we explained the importance of obtaining Willie's location information. However, it is desirable to explore how Alice's uncertainty about such information impacts the system performance, e.g., the effective covert rate (see, e.g., \cite{forouzesh2020covert2}). Also, exploring potential approaches that enable obtaining partial information about Willie's location and then characterizing their performance is a viable research direction.

\subsubsection{Precise Performance Characterization of the  Multi-Array System Model}\label{Sec5C2} In Section \ref{Sec5B}, we highlighted how the performance of the multi-array system model proposed in Section \ref{Sec5A} can, \textit{approximately}, be characterized using the analytical results in this paper. One might be able to provide a more rigorous analysis by eliminating the two assumptions made in Section \ref{Sec5B}. To this end, some analytical tools, such as \cite{karagiannidis2001distribution}, to characterize the weighted sum of the involved RVs in \eqref{ywj} and \eqref{ybj} are potentially useful.

\subsubsection{Distribution of the Jamming Signal Power $P_J$}\label{Sec5C3} In this paper, we considered a uniform distribution for $P_J$ in the interval $[0,P_J^{\rm max}]$. Characterizing the performance metrics for the considered covert mmWave communication system model under other statistical distributions for $P_J$ is a straightforward yet important follow-up research that can help determining the optimal/best distribution(s) for the jamming signal power.

{\subsubsection{Covert Communication under Partial Knowledge about $P_J$} In practice, we may be interested in satisfying the covertness requirement for a certain $\epsilon$ and not for all $\epsilon>0$. In that case, one might be able to slightly degrade the covertness requirement with the hope of improving the quality of the Alice-Bob transmission. Characterizing performance-covertness trade-offs and developing efficient schemes to allow Bob achieving some knowledge about the jamming signal (while minimizing Willie's knowledge) are important directions for future research.}

\subsubsection{Covert MmWave Communication under other Potential System Models} In this paper, we have incorporated jamming signals with random realizations per transmission block to enable positive-rate covert mmWave communication in the limit of large blocklengths. One can explore covert mmWave communication under other potential system models, such as uncertainty about the channel gains, noise power, transmission blocks, etc., by utilizing results already established in the literature (see, e.g., \cite{ lee2015achieving,goeckel2016covert,bash2016covert,hu2018covert,wang2019covert,shahzad2017covert,shahzad2018achieving,shahzad2019covert,hu2019covert}).

\subsubsection{Multiple Alice/Bob/Willie} Throughout this paper, we considered the conventional setting of covert communication which consists of a single legitimate transmitter Alice, a single legitimate receiver Bob, and a single warden Willie. Extending the results of the paper to more realistic scenarios, consisting of multiple legitimate transmitters and receivers and multiple wardens (see, e.g., \cite{forouzesh2020covert,chen2021uav,arghavani2021game}), is an important direction for future research. 
}

\section{Numerical Results}\label{Sec6}
In this section, we provide numerical results for various performance metrics delineated in \Tref{col2}, \Tref{thm3}, \Pref{prop4}, and \Tref{Thm5}. The parameters listed in Table \ref{T2} are considered in our numerical analysis unless explicitly mentioned. 
It is assumed that the beamsteering error follows a Gaussian distribution with mean zero and variance $\Delta^2$; hence, $F_{|\mathcal{E}|}(x)={\rm erf}(x/(\Delta\sqrt{2}))$ where ${\rm erf}({\cdot})$ denotes the error function \cite{di2015stochastic}. 
Moreover, the blockage model $P_{\rm LOS}(d_{ij})={\rm e}^{-d_{ij}/200}$ \cite{andrews2017modeling} is used throughout the numerical analysis.

\begin{table}[t]
	\centering
	\caption{Parameters used for the numerical analysis.}
	\label{T2}
	\begin{tabular}{M{1.95in}||M{1.1in}}
		Coefficients & Values\\
		\hline \hline\vspace{0.05cm}
		Link lengths $(d_{aw},d_{ab})$ & $(25,25)$ \si{m}\\\hline\vspace{0.05cm}
		Path loss exponents $(\alpha_{\rm L},\alpha_{\rm N})$ & $(2,4)$\\\hline\vspace{0.1cm}
	Path loss intercepts $(C_{\rm L},C_{\rm N})$ & 	 $(10^{-7},10^{-7})$ \\\hline	\vspace{0.05cm}
Main lobe gains $(M_{a,f},M_{a,s},M_{b})$&$(15,15,15)$ \si{dB}\\\hline\vspace{0.05cm}
Side lobe gains $(m_{a,f},m_{a,s},m_{b})$&$(-5,-5,-5)$ \si{dB}\\\hline\vspace{0.05cm}
Transmit power of Alice's first array, $P_a$ & $20$ \si{dBm}\\\hline\vspace{0.05cm}
Noise power $(\sigma^2_w,\sigma^2_b)$ & $(-74,-74)$ \si{dBm}\\\hline\vspace{0.05cm}
Nakagami fading parameters $(\nu_{\rm L},\nu_{\rm N})$ & $(3,2)$\\\hline\vspace{0.05cm}
Array beamwidths $(\theta_{a,f},\theta_{a,s},\theta_{b})$ & $(30^{\rm o},30^{\rm o},30^{\rm o})$\\\hline\vspace{0.05cm}
Beamsteering error parameter, $\Delta$ & $5^{\rm o}$
	\end{tabular}
\end{table}

\begin{figure}[t]
	\centering
	\includegraphics[width=3.6in]{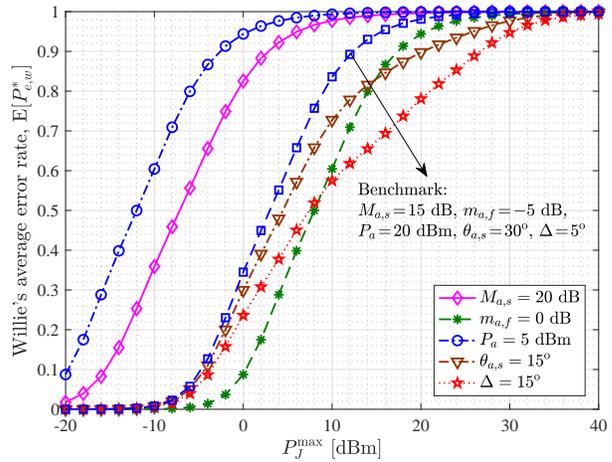}
	\caption{The expected value $\E[P^*_{e,w}]$ of Willie's detection error rate for a benchmark scenario with $M_{a,s}=15$ \si{dB}, $m_{a,f}=-5$ \si{dB}, $P_a=20$ \si{dBm}, $\theta_{a,s}=30^{\rm o}$, and  $\Delta=5^{\rm o}$. The effect of different parameters is explored by considering the values $M_{a,s}=20$ \si{dB}, $m_{a,f}=0$ \si{dB}, $P_a=5$ \si{dBm},  $\theta_{a,s}=15^{\rm o}$, and  $\Delta=15^{\rm o}$  while keeping the rest of the parameters exactly the same as the benchmark scenario.}\label{Fig1}
		\vspace{-0.15in}
\end{figure}
\begin{figure}[t]
	\centering
	\includegraphics[width=3.6in]{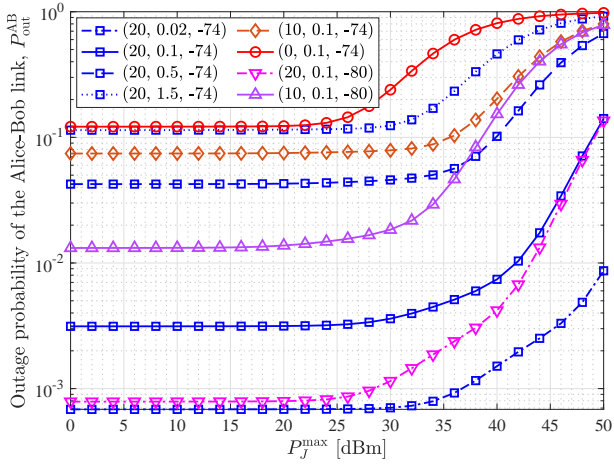}	
	\caption{The outage probability of the Alice-Bob link for  various values of the transmit power $P_a$, threshold rate $R_b$, and noise variance $\sigma^2_b$. The elements of the triples in the legend are $P_a$ in \si{dBm}, $R_b$, and $\sigma^2_b$ in \si{dBm}, respectively.}\label{Fig2}
	\vspace{-0.1in}
\end{figure}
Fig. \ref{Fig1} shows the expected value $\E[P^*_{e,w}]$ of Willie's detection error rate for a benchmark scenario, corresponding to the parameters listed in Table \ref{T2}, as a function of $P_J^{\rm max}$. Moreover, the impact of some relevant parameters, i.e., $M_{a,s}$, $m_{a,f}$, $P_a$, $\theta_{a,s}$, and $\Delta$ is evaluated by changing each of these parameters while keeping the rest of the parameters exactly the same as the benchmark scenario. 
As expected, $\E[P^*_{e,w}]$ monotonically increases with $P_J^{\rm max}$ {since a larger jamming signal will degrade Willie's performance to a greater extent.} Also, reducing $P_a$ degrades Willie's performance since the power level of the desired signal is reduced making it more difficult to be detectable by Willie. Moreover, increasing $M_{a,s}$ deteriorates Willie's performance by exposing his receiver to a more intense jamming signal. 
On the other hand, decreasing $\theta_{a,s}$ or increasing $\Delta$ decrease $\E[P^*_{e,w}]$ since they reduce the probability of Willie receiving the jamming signal with the main lobe of Alice's second array. Finally, increasing $m_{a,f}$ also improves Willie's performance by revealing a higher level of the desired signal, gained by $m_{a,f}$, to Willie.

The outage probability of the Alice-Bob link is illustrated in Fig. \ref{Fig2} for various values of the transmit power $P_a$, threshold rate $R_b$, and noise variance $\sigma^2_b$. The rest of the parameters are the same as those in Table \ref{T2}. As expected, $P_{\rm out}^{\rm AB}$ monotonically increases with $P_J^{\rm max}$. {Moreover, the outage probability increases by increasing the threshold rate $R_b$ since it is harder to guarantee a larger target rate (without outage). Additionally,  the reliability of Alice-to-Bob transmission degrades by increasing the noise variance $\sigma^2_b$ while increasing $P_a$ improves the performance by exposing a higher level of the desired signal to Bob.}

\begin{table}[t]
	\centering
	\caption{Covert rates for $\epsilon=0.05$ and various threshold rates. }
	\label{T3}
	\begin{tabular}{M{0.7cm}||M{0.9cm}M{0.8cm}M{0.8cm}M{0.8cm}M{0.8cm}M{0.8cm}}  
		$R_b$ & $0.1$ & $0.5$& $1$& $2.5$ & $5$ & $10$  \\ \hline\hline \vspace{0.08cm}
		$P_{\rm out}^{*\rm AB}$ & $0.00314$ & $0.04253$ & $0.0935$ & $0.121$ & $0.1308$ & $0.9913$ \\ \hline\vspace{0.08cm}
		$\overline{R}^*_{a,b}$ & $0.0997$ &   $0.4787$ &   $0.9065$  &  $2.1975$  &  $4.3459$ &   $0.0866$
		\\ \hline 
	\end{tabular}
\end{table}
\begin{figure}[t]
	\centering
	\includegraphics[trim=0.8cm 0cm -0.8cm 0cm ,width=3.6in]{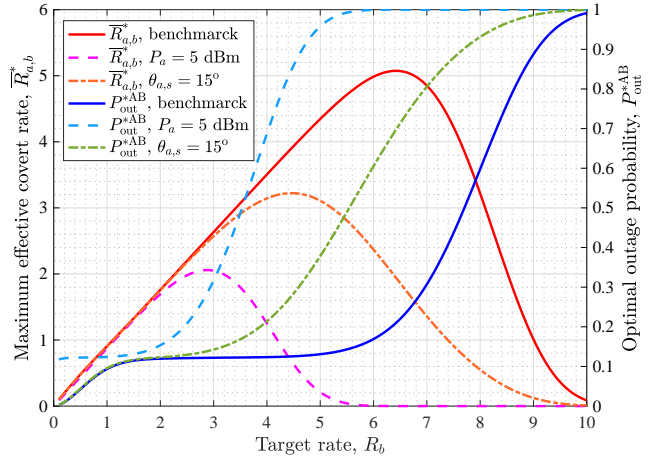}
	\caption{The effective covert rate $\overline{R}^*_{a,b}$ and the corresponding optimal outage probability $P_{\rm out}^{*\rm AB}$ as a function of target rate $R_b$. In addition to the benchmark scenario, two other scenarios of Fig. \ref{Fig1}, namely those obtained by changing $P_a$ from $20$ \si{dBm} to $5$ \si{dBm}, and $\theta_{a,s}$ from $30^{\rm o}$ to $15^{\rm o}$, are also considered.}\label{Fig3}
\end{figure}
Effective covert rates corresponding to the benchmark scenario in Fig. \ref{Fig1} are summarized in Table \ref{T3} for $\epsilon=0.05$ and various threshold rates. To obtain these results, we first numerically solved the equation $\E[P^*_{e,w}]=1-\epsilon$ for $P_J^{\rm max}$, given the parameters corresponding to the benchmark scenario. This resulted in the optimal value of $P_{J,\rm opt}^{\rm max}=15.52$ \si{dBm}. Then we computed the corresponding optimal outage probabilities $P_{\rm out}^{*\rm AB}$, for various target rates, according to \Tref{thm3}. The effective covert rate $\overline{R}^*_{a,b}$ and the corresponding optimal outage probability $P_{\rm out}^{*\rm AB}$ for the considered benchmark scenario is also plotted, as a function of target rate $R_b$, in Fig. \ref{Fig3}. Furthermore, Fig. \ref{Fig3} includes the results of $\overline{R}^*_{a,b}$ and $P_{\rm out}^{*\rm AB}$ for two other scenarios of Fig. \ref{Fig1}, namely those obtained by changing $P_a$ from $20$ \si{dBm} to $5$ \si{dBm}, and $\theta_{a,s}$ from $30^{\rm o}$ to $15^{\rm o}$.
It is observed that, for a given link, the effective covert rate first increases and then decreases by increasing the threshold rate. {This is because, after some point, the outage probability $P_{\rm out}^{*\rm AB}$ quickly transitions from $0$ to $1$.} The maximum effective covert rate that is achievable for the benchmark scenario is $5.0743$ that is obtained for the target rate of $R_b=6.42$ with the corresponding optimal outage probability of $P_{\rm out}^{*\rm AB}=0.2096$. Moreover, maximum effective covert rates of $\overline{R}^*_{a,b}=2.0585$ and $3.2223$ are achievable at the target rates of $R_b=2.88$ and $4.46$ with the corresponding optimal outage probabilities of $P_{\rm out}^{*\rm AB}=0.2853$ and $0.2775$ for the scenarios of $P_a=5$ \si{dBm} and $\theta_{a,s}=15^{\rm o}$, respectively. The optimal values of $P_J^{\rm max}$ for these two scenarios are $P_{J,\rm opt}^{\rm max}=0.52$ \si{dBm} and $25.91$ \si{dBm}, respectively. Although reducing $\theta_{a,s}$ to $15^{\rm o}$ does not directly impact the performance of the Alice-Bob link (e.g., the outage probability or ergodic capacity), it requires much stronger jamming signals with $P_{J,\rm opt}^{\rm max}=25.91$ \si{dBm} to satisfy the covertness requirement which significantly degrades the performance compared to the benchmark scenario. On the other hand, the performance drop-off of the case $P_a=5$ \si{dBm} is a direct consequence of the much lower transmit power used compared to the benchmark scenario though a much weaker jamming signal of $P_{J,\rm opt}^{\rm max}=0.52$ \si{dBm} is enough to satisfy the covertness constraint.

\begin{figure}[t]
	\centering
	\includegraphics[width=3.6in]{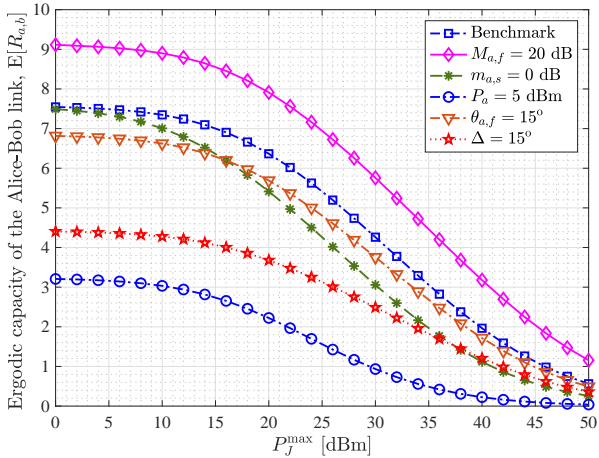}
	\caption{The ergodic capacity $\E[R_{a,b}]$ of the Alice-Bob link for the benchmark scenario, corresponding to the parameters in Table \ref{T2}, and several other setups obtained by changing $M_{a,f}$, $m_{a,s}$, $P_a$,  $\theta_{a,f}$, and $\Delta$.}\label{Fig4}
\end{figure}
The ergodic capacity $\E[R_{a,b}]$ of the Alice-Bob link is shown in Fig. \ref{Fig4} for the benchmark scenario, corresponding to the parameters in Table \ref{T2}. Moreover, the impact of several parameters is examined by changing each one while keeping the reset of the parameters as Table \ref{T2}. As expected, $\E[R_{a,b}]$ monotonically decreases by $P_J^{\rm max}$. { Moreover, enlarging $M_{a,f}$, $P_a$, and $\theta_{a,f}$ positively impacts the ergodic capacity by exposing a higher level of the desired signal to Bob. Additionally, increasing $m_{a,s}$ reduces $\E[R_{a,b}]$ by imposing a stronger jammer on Bob. Finally, increasing $\Delta$ negatively impacts the ergodic capacity by reducing the chance of receiving the desired signal at Bob with a main-lobe gain.}
It is worth mentioning that mmWave links benefit from much larger bandwidths compared to RF links; hence, the results in Figs. \ref{Fig3} and \ref{Fig4} imply much higher data rates, in bits per second, compared to that of RF communication counterparts.

{Finally, it is important to study the performance loss as a result of the existence of the warden Willie. Note that the performance metrics in the absence of Willie can be obtained from the results of the paper by studying the performance of our system model in the limit of $P_J^{\rm max}\to 0$ (i.e., $P_J^{\rm max}\to -\infty$ \si{dBm}). Observe that the outage probability and ergodic capacity results plotted in Figs. \ref{Fig2} and \ref{Fig4}, respectively, as a function of $P_J^{\rm max}$, start at some constant values and then change once the value of $P_J^{\rm max}$ is large enough to impact the performance of the Alice-Bob link. Therefore, the constant values of these plots at very small values of $P_J^{\rm max}$ correspond to the outage probability and ergodic capacity of the system in the absence of Willie. Hence, Figs. \ref{Fig2} and \ref{Fig4} clearly illustrate the performance loss due to the existence of Willie as a function of $P_J^{\rm max}$. In fact, the outage probability increases and the ergodic capacity decreases as we increase $P_J^{\rm max}$ to support a stronger level of covertness due to the existence of the warden.}

\section{Conclusions}\label{Sec7}
In this paper, we investigated covert communication over mmWave links. We employed a dual-beam transmitter to simultaneously transmit the desired signal to the destination and propagate a jamming signal to degrade the warden's  performance. We characterized Willie's detection error rate and the closed-form of its expected value from Alice's perspective. We then derived the closed-form expression for the outage probability of the Alice-Bob link which enabled us to formulate the optimal achievable covert rates. We further obtained tractable forms for the ergodic capacity of the Alice-Bob link involving only one-dimensional integrals that can be computed in closed forms for most ranges of the channel parameters.
Moreover, we elucidated how the results can be extended to more practical scenarios, taking into account the uncertainty about Willie's location. We also highlighted several interesting directions for future research on covert mmWave communication. Through comprehensive numerical studies, we analyzed the behavior of the derived performance metrics with respect to variety of channel and system parameters. Our results demonstrated the advantages of covert mmWave communication compared to the RF counterpart, calling for further research on this novel area.

\appendices
\section{A Useful Lemma for the Integration over ${\rm Ei}(\cdot)$}\label{App_A}
In \cite[Lemma 1]{jamali2019uplink}, a useful lemma is proved for the integral of $\int_{c_1}^{c_2}{\rm e}^{bx}{\rm Ei}(ax)dx$ with $c_1,c_2>0$, $a<0$, and $b\in\mathbb{R}$ such that (s.t.) $a+b<0$. In this appendix, we prove that the same result, with a slight change, can be applied to the case of $b=-a$, i.e., $a+b=0$ (see, e.g., \cite[Eq. (06.35.21.0014.01)]{wolfram}).
\begin{lemma}\label{lemma_app1}
For any $c_1,c_2>0$ and $a<0$, we have
\begin{align}\label{app1_1}
\int_{c_1}^{c_2}{\rm e}^{-ax}{\rm Ei}(ax)dx=&\frac{1}{-a}\Big[{\rm e}^{-ac_2}{\rm Ei}(ac_2)\nonumber\\
&-{\rm e}^{-ac_1}{\rm Ei}(ac_1)-\ln\left({c_2}/{c_1}\right)\Big].
\end{align}
\end{lemma}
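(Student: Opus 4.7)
The plan is to prove \Lref{lemma_app1} directly by a single integration by parts, exploiting the fact that the exponential factor ${\rm e}^{-ax}$ and the inner exponential generated by differentiating ${\rm Ei}(ax)$ cancel exactly when $b=-a$. This is the essential difference between the present case and \cite[Lemma 1]{jamali2019uplink}, which handled $a+b<0$: there, the integration by parts leaves a residual exponential and one must carry out a further (non-elementary) manipulation, whereas here the residual integrand collapses to $1/x$ and the remaining integral is immediate.

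Concretely, I would set $u={\rm Ei}(ax)$ and $dv={\rm e}^{-ax}\,dx$. Using the standard identity $\frac{d}{dx}{\rm Ei}(x)=\frac{{\rm e}^{x}}{x}$ together with the chain rule, one obtains $du=\frac{{\rm e}^{ax}}{x}\,dx$, and clearly $v=-\frac{1}{a}{\rm e}^{-ax}$. Substituting into the integration-by-parts formula gives
\begin{align*}
\int_{c_1}^{c_2}{\rm e}^{-ax}{\rm Ei}(ax)\,dx
&=\left[-\tfrac{1}{a}{\rm e}^{-ax}{\rm Ei}(ax)\right]_{c_1}^{c_2}+\tfrac{1}{a}\int_{c_1}^{c_2}\frac{{\rm e}^{-ax}{\rm e}^{ax}}{x}\,dx.
\end{align*}
The key cancellation ${\rm e}^{-ax}{\rm e}^{ax}=1$ reduces the remaining integral to $\int_{c_1}^{c_2}\frac{1}{x}\,dx=\ln(c_2/c_1)$, and rearranging the boundary terms yields exactly the claimed identity \eqref{app1_1}.

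All steps are valid because $c_1,c_2>0$ ensures $ax<0$ on the interval of integration (as $a<0$), so ${\rm Ei}(ax)$ is well-defined via the principal-value definition used throughout the paper (for negative arguments, per \cite[Eq. (8.211.1)]{gradshteyn2014table}), and the integrand has no singularities on $[c_1,c_2]$. There is no real obstacle in this proof: the whole content is the observation that the choice $b=-a$ causes the exponentials to cancel after one integration by parts, turning the remaining integral into an elementary logarithm. I would present it as a short two- or three-line derivation following the template above.
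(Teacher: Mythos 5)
Your proof is correct, and it takes a genuinely different route from the paper's. The paper derives \Lref{lemma_app1} as a degenerate boundary case of the earlier result \cite[Lemma 1]{jamali2019uplink}, which gives $\int_{c_1}^{c_2}{\rm e}^{bx}{\rm Ei}(ax)dx=\frac{1}{b}\bigl[{\rm e}^{bt}{\rm Ei}(at)-{\rm Ei}([a+b]t)\bigr]\big|_{c_1}^{c_2}$ for $a+b<0$; when $b=-a$ the second term becomes ${\rm Ei}(0^-)$, which is divergent, so the paper evaluates the \emph{difference} of the two ${\rm Ei}$ terms in the limit via the expansion $\lim_{x\to0}{\rm Ei}(x)=\gamma+\ln|x|$, the Euler constants cancel, and the logarithm $\ln(c_2/c_1)$ survives. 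Your argument instead proves the identity from scratch by a single integration by parts with $u={\rm Ei}(ax)$, $dv={\rm e}^{-ax}dx$, where the exact cancellation ${\rm e}^{-ax}{\rm e}^{ax}=1$ collapses the residual integral to $\int_{c_1}^{c_2}x^{-1}dx=\ln(c_2/c_1)$; the sign bookkeeping ($\tfrac{1}{a}=-\tfrac{1}{-a}$) reproduces \eqref{app1_1} exactly, and the hypotheses $c_1,c_2>0$, $a<0$ guarantee that ${\rm Ei}(ax)$ and $1/x$ are smooth on the interval of integration, so the integration by parts is legitimate. Your approach is more elementary and fully self-contained, avoiding both the citation of the external lemma and the slightly delicate limit of a difference of divergent quantities; what the paper's approach buys is uniformity, presenting this lemma as the continuous extension ($a+b\to 0^-$) of the already-established formula rather than as a separate computation.
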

\begin{proof}
Note based on \cite[Lemma 1]{jamali2019uplink} that for $c_1,c_2>0$, $a<0$, and $b\in\mathbb{R}$ s.t. $a+b<0$, we have
\begin{align}\label{app1_2}
\int_{c_1}^{c_2}{\rm e}^{bx}{\rm Ei}(ax)dx=\frac{1}{b}\left[{\rm e}^{bt}{\rm Ei}(at)-{\rm Ei}([a+b]t)\right]\!{\Big |}_{c_1}^{c_2},
\end{align}
where $f(t)|_{c_1}^{c_2}\triangleq f(c_2)-f(c_1)$ for the function $f(t)$. In the case of $b=-a$ per \Lref{lemma_app1}, the argument of the second exponential integral function ${\rm Ei}([a+b]t)$ in \eqref{app1_2} is zero. Based on \cite[Eq. (1)]{harris1957tables}, $\lim_{x\to 0}{\rm Ei}(x)=\gamma+\ln|x|$, where $\gamma=0.57721$ is the Euler's constant. Therefore, we can write ${\rm Ei}([a+b]t){\big |}_{c_1}^{c_2}$ for the case of $a=-b$ as
\begin{align}\label{app1_3}
\!\!{\rm Ei}([a+b]t){\big |}_{c_1}^{c_2}\!=\!\lim_{x\to 0} {\rm Ei}(xt){\big |}_{c_1}^{c_2}\!=\!\lim_{x\to 0} \ln\!\left(\!\frac{|xc_2|}{|xc_1|}\!\right)\!\!=\!\ln\!\left(\!\frac{c_2}{c_1}\!\right)\!.\!
\end{align}
This together with some similar arguments as the proof of \cite[Lemma 1]{jamali2019uplink} completes the proof of \Lref{lemma_app1}.
\end{proof}


\end{document}